\DeclareFontFamily{U}{mathx}{\hyphenchar\font45}
\DeclareFontShape{U}{mathx}{m}{n}{
      <5> <6> <7> <8> <9> <10>
      <10.95> <12> <14.4> <17.28> <20.74> <24.88>
      mathx10
      }{}
\DeclareSymbolFont{mathx}{U}{mathx}{m}{n}
\DeclareMathSymbol{\bigtimes}{1}{mathx}{"91}
\definecolor{DarkRed}{rgb}{0.5,0.1,0.1}
\definecolor{DarkBlue}{rgb}{0.1,0.1,0.5}
\definecolor{ForestGreen}{rgb}{0.1333,0.5451,0.1333}
\definecolor{Red}{rgb}{0.9,0,0}
\crefname{property}{property}{Property}
\crefname{equation}{eq}{Eq}
\def\BState{\State\hskip-\ALG@thistlm}
\newtheorem{theorem}{Theorem}
\newtheorem{lemma}{Lemma}[section]
\newtheorem{proposition}[lemma]{Proposition}
\newtheorem{corollary}[lemma]{Corollary}
\newtheorem{claim}[lemma]{Claim}
\newtheorem{fact}[lemma]{Fact}
\newtheorem{definition}[lemma]{Definition}
\newtheorem*{claim*}{Claim}
\newtheorem*{assumption*}{Assumption}
\newtheorem*{proposition*}{Proposition}
\newtheorem*{lemma*}{Lemma}
\newtheorem*{problem5*}{Problem}
\crefname{lemma}{Lemma}{Lemmas}
\crefname{claim}{claim}{claims}
\newtheorem{mdresult}{Result}
\newtheorem{conjecture}[lemma]{Conjecture}
\newtheorem{observation}[lemma]{Observation}
\newtheoremstyle{restate}{}{}{\itshape}{}{\bfseries}{~(restated).}{.5em}{\thmnote{#3}}
\theoremstyle{restate}
\newtheorem*{restate}{}
\renewcommand{\qed}{\nobreak \ifvmode \relax \else
      \ifdim\lastskip<1.5em \hskip-\lastskip
      \hskip1.5em plus0em minus0.5em \fi \nobreak
      \vrule height0.75em width0.5em depth0.25em\fi}
\newcommand{\Ot}{\ensuremath{\widetilde{O}}}
\newcommand{\eps}{\ensuremath{\varepsilon}}
\newcommand{\Bracket}[1]{\Big[#1\Big]}
\newcommand{\paren}[1]{\ensuremath{\left(#1\right)}\xspace}
\newcommand{\card}[1]{\left\vert{#1}\right\vert}
\newcommand{\IR}{\ensuremath{\mathbb{R}}}
\newcommand{\IN}{\ensuremath{\mathbb{N}}}
\newcommand{\set}[1]{\ensuremath{\left\{ #1 \right\}}}
\DeclareMathOperator*{\Exp}{\ensuremath{{\mathbb{E}}}}
\DeclareMathOperator*{\Prob}{\ensuremath{\textnormal{Pr}}}
\renewcommand{\Pr}{\Prob}
\newenvironment{tbox}{\begin{tcolorbox}[
		enlarge top by=5pt,
		enlarge bottom by=5pt,
		 breakable,
		 boxsep=0pt,
                  left=4pt,
                  right=4pt,
                  top=10pt,
                  arc=4pt,
                  boxrule=1pt,toprule=1pt,
                  colback=white,
                  ]
	}
{\end{tcolorbox}}
\newcommand{\rv}[1]{\ensuremath{{\mathsf{#1}}}\xspace}
\newcommand{\rA}{\rv{A}}
\newcommand{\rB}{\rv{B}}
\newcommand{\rW}{\rv{W}}
\newcommand{\rX}{\rv{X}}
\newcommand{\rY}{\rv{Y}}
\newcommand{\rZ}{\rv{Z}}
\newcommand{\rM}{\rv{M}}
\newcommand{\II}{\ensuremath{\mathbb{I}}}
\newcommand{\HH}{\ensuremath{\mathbb{H}}}
\newcommand{\mireal}[1][]{
  \ifx\relax#1\relax%
    \II(\mione \,; \mitwo)%
  \else%
    \II(\mione \,; \mitwo\mid #1)%
  \fi
}
\newcommand{\en}[1]{\ensuremath{\HH(#1)}}
\newcommand{\cX}{\ensuremath{\mathcal{X}}}
\newcommand{\cE}{\ensuremath{\mathcal{E}}}
\newcommand{\cA}{\ensuremath{\mathcal{A}}}
\newcommand{\player}[1]{\ensuremath{\mathcal{P}_{#1}}}
\newcommand{\chain}{\textnormal{\textsc{chain}}}
\newcommand{\augchain}{\textnormal{\textsc{aug-chain}}}
\newcommand{\indexprob}{\textnormal{\textsc{index}}}
\newcommand{\cD}{\mathcal{D}}
\newcommand{\prot}{\ensuremath{\pi}}
\newcommand{\Prot}{\ensuremath{\Pi}}
\newcommand{\ind}{\ensuremath{\sigma}}
\newcommand{\biasedind}{\ensuremath{\textnormal{\textsc{bias-ind}}}}
\newcommand{\rrho}{\ensuremath{\bm{\rho}}}
\newcommand{\cLhalf}{\ensuremath{\mathcal{L}}}
\title{Optimal Communication Complexity of Chained Index}
\author{Janani Sundaresan\footnote{(jsundaresan@uwaterloo.ca) Cheriton School of Computer Science, University of Waterloo. Supported in part by Sepehr Assadi's Sloan Research Fellowship and startup grant from University of Waterloo. \smallskip}}
\date{}
\begin{document}
\maketitle


\begin{abstract}
	We study the $\chain$ communication problem introduced by Cormode et al. [ICALP 2019]. 
	For $k\geq 1$, in the $\chain_{n,k}$ problem, there are $k$ string and index pairs $(X_i, \ind_i)$ for $i \in [k]$ such that the value at position $\ind_i$ in string $X_i$ is the same bit for all $k$ pairs. The input is shared between $k+1$ players as follows.
	Player 1 has the first string $X_1 \in \{0,1\}^n$, player 2 has the first index $\ind_1 \in [n]$ and the second string $X_2 \in \{0,1\}^n$, player 3 has the second index $\ind_2 \in [n]$ along with the third string $X_3 \in \{0,1\}^n$, and so on. 
	Player $k+1$ has the last index $\ind_k \in [n]$. 
	The communication is one way from each player to the next, starting from player 1 to player 2, then from player 2 to player 3 and so on. Player $k+1$, after receiving the message from player $k$, has to output a single bit which is the value at position $\ind_i$ in $X_i$ for any $i \in [k]$. It is a generalization of the well-studied $\indexprob$ problem, which is equivalent to $\chain_{n, 2}$.
	
		Cormode et al. proved that the $\chain_{n,k}$ problem requires $\Omega(n/k^2)$ communication, and they used it to prove streaming lower bounds for the approximation of maximum independent sets. Subsequently, Feldman et al. [STOC 2020] used it to prove lower bounds for streaming submodular maximization. However, it is not known whether the $\Omega(n/k^2)$ lower bound used in these works is optimal for the problem, and in fact, it was conjectured by Cormode et al. that $\Omega(n)$ bits are necessary.  
	
	We prove the optimal lower bound of $\Omega(n)$ for $\chain_{n,k}$ when $k = o(n/\log n)$  as our main result. This settles the open conjecture of Cormode et al., barring the range of $k = \Omega(n /\log n)$.  The main technique is a reduction to a non-standard $\indexprob$ problem where the input to the players is such that the answer is biased away from uniform. This biased version of $\indexprob$ is analyzed using tools from information theory. As a corollary, we get an improved lower bound for approximation of maximum independent set in vertex arrival streams via a reduction from $\chain$ directly.
\end{abstract}

\clearpage

\setcounter{tocdepth}{3}
\tableofcontents

\clearpage

\section{Introduction}

The $\indexprob$ problem is one of the foundational problems in communication complexity. For $n \geq 1$, in the $\indexprob_n$ problem, there are two players Alice and Bob. Alice has a string $X \in \{0,1\}^n$ and Bob has an index $\ind \in [n]$, and Bob has to output the value of $X$ at position $\ind$. If the communication is one-way from Alice to Bob, it is easy to show that Alice needs to send $\Omega(n)$ bits to get any constant advantage  \cite{Ablayev96,KremerNR95}. 
This problem has been well-studied in multiple settings, and we know tight trade-offs in the two-party communication model for communication complexity \cite{MiltersenNSW98}, information complexity \cite{JainRS09}, and quantum communication complexity \cite{BuhrmanWolf01,JainRS09}.

Among the numerous applications of communication complexity, one that is of interest to us is proving lower bounds for streaming algorithms. $\indexprob$ and its variants, in particular, have been quite useful in this context, for example, in \cite{IndykW03, FeigenbaumKMSZ08, GuhaM09, GuruswamiO13, DarkK20, ChenKPSSY21}. This is by no means an exhaustive list. 

In this paper, we study a natural generalization of $\indexprob$, called chained index ($\chain_{n,k}$ for $n,k \geq 1$) introduced by \cite{CormodeDK19}. There are $k$ different instances of $\indexprob_n$, correlated so that they have the same answer. They are ``chained" together, where each player holds the index to the previous instance, and also the string for the next instance.  

\begin{definition}[Informal]\label{def:informal-chain}
	In $\chain_{n,k}$, there are $k$ instances of $\indexprob_n$, all with the same answer. 
	Players 1 and 2 take on the role of Alice and Bob respectively in the first instance, players 2 and 3 take on the role of Alice and Bob respectively for the second instance, and so on, all the way till players $k$ and $k+1$ for the last instance.
	
	Communication is one-way from each player to the next in ascending order. The last player has to output the answer.  The communication cost is the total number of bits in all the messages sent by the players. See \Cref{fig:informal-chain} for an illustration. 
\end{definition}

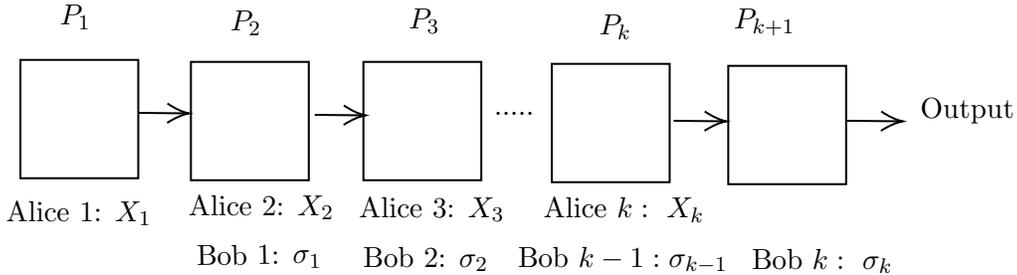
\begin{figure}[h!]
	\centering
	\tikzset{every picture/.style={line width=0.75pt}} 

\begin{tikzpicture}[x=0.75pt,y=0.75pt,yscale=-1,xscale=1]
	
	\draw   (111.55,70.27) -- (171,70.27) -- (171,130) -- (111.55,130) -- cycle ;
	\draw    (171,97) -- (194.55,97.25) ;
	\draw [shift={(196.55,97.27)}, rotate = 180.61] [color={rgb, 255:red, 0; green, 0; blue, 0 }  ][line width=0.75]    (10.93,-4.9) .. controls (6.95,-2.3) and (3.31,-0.67) .. (0,0) .. controls (3.31,0.67) and (6.95,2.3) .. (10.93,4.9)   ;
	\draw    (260,98) -- (283.55,98.25) ;
	\draw [shift={(285.55,98.27)}, rotate = 180.61] [color={rgb, 255:red, 0; green, 0; blue, 0 }  ][line width=0.75]    (10.93,-4.9) .. controls (6.95,-2.3) and (3.31,-0.67) .. (0,0) .. controls (3.31,0.67) and (6.95,2.3) .. (10.93,4.9)   ;
	\draw    (441,101) -- (465.55,101.25) ;
	\draw [shift={(467.55,101.27)}, rotate = 180.59] [color={rgb, 255:red, 0; green, 0; blue, 0 }  ][line width=0.75]    (10.93,-4.9) .. controls (6.95,-2.3) and (3.31,-0.67) .. (0,0) .. controls (3.31,0.67) and (6.95,2.3) .. (10.93,4.9)   ;
	\draw    (528,101) -- (554.55,101.25) ;
	\draw [shift={(556.55,101.27)}, rotate = 180.55] [color={rgb, 255:red, 0; green, 0; blue, 0 }  ][line width=0.75]    (10.93,-4.9) .. controls (6.95,-2.3) and (3.31,-0.67) .. (0,0) .. controls (3.31,0.67) and (6.95,2.3) .. (10.93,4.9)   ;
	\draw   (197.55,71.27) -- (257,71.27) -- (257,131) -- (197.55,131) -- cycle ;
	\draw   (284.55,71.27) -- (344,71.27) -- (344,131) -- (284.55,131) -- cycle ;
	\draw   (379.55,72.27) -- (439,72.27) -- (439,132) -- (379.55,132) -- cycle ;
	\draw   (468.55,73.27) -- (528,73.27) -- (528,133) -- (468.55,133) -- cycle ;
	
	\draw (349,95) node [anchor=north west][inner sep=0.75pt]   [align=left] {$\displaystyle .....$};
	\draw (130,41) node [anchor=north west][inner sep=0.75pt]   [align=left] {$\displaystyle P_{1}$};
	\draw (216,44) node [anchor=north west][inner sep=0.75pt]   [align=left] {$\displaystyle P_{2}$};
	\draw (306,43) node [anchor=north west][inner sep=0.75pt]   [align=left] {$\displaystyle P_{3}$};
	\draw (470,43) node [anchor=north west][inner sep=0.75pt]   [align=left] {$\displaystyle P_{k+1}$};
	\draw (402,46) node [anchor=north west][inner sep=0.75pt]   [align=left] {$\displaystyle P_{k}$};
	\draw (103,140) node [anchor=north west][inner sep=0.75pt]   [align=left] {Alice 1: $\displaystyle X_{1}$};
	\draw (198.83,162.43) node [anchor=north west][inner sep=0.75pt]  [rotate=-358.79] [align=left] {Bob 1: $\displaystyle \sigma _{1}$};
	\draw (283,163) node [anchor=north west][inner sep=0.75pt]   [align=left] {Bob 2: $\displaystyle \sigma _{2}$};
	\draw (195,137) node [anchor=north west][inner sep=0.75pt]   [align=left] {Alice 2: $\displaystyle X_{2}$};
	\draw (281,138) node [anchor=north west][inner sep=0.75pt]   [align=left] {Alice 3: $\displaystyle X_{3}$};
	\draw (374,138) node [anchor=north west][inner sep=0.75pt]   [align=left] {Alice $\displaystyle k:\ X_{k}$};
	\draw (361,163) node [anchor=north west][inner sep=0.75pt]   [align=left] {Bob $\displaystyle k-1:\sigma _{k-1}$};
	\draw (479,164) node [anchor=north west][inner sep=0.75pt]   [align=left] {Bob $\displaystyle k:\ \sigma _{k}$};
	\draw (564,88) node [anchor=north west][inner sep=0.75pt]   [align=left] {Output};

\end{tikzpicture}\caption{An illustration of the $\chain_{n,k}$ problem with $k$ correlated sub-instances of $\indexprob_n$ from \Cref{def:informal-chain}. The arrows illustrate that the message is from $P_i $ to $P_{i+1}$ for $i \in [k]$.}\label{fig:informal-chain}
\end{figure}

In \cite{CormodeDK19}, a reduction from $\chain$ was employed to get a lower bound for approximation of maximum independent sets in vertex arrival streams. 
Before its introduction, \cite{Kapralov12} used the problem implicitly to get a lower bound of $(1-1/e)$ in the approximation factor for maximum matching in $\tilde{O}(n)$ space in vertex arrival streams. 

The problem has been used by the breakthrough result of \cite{FeldmanNSZ20} to study the multi-party communication complexity of submodular maximization. They proved that any randomized $p$-party protocol which maximizes a monotone submodular function $f:\{0,1\}^N \rightarrow \IR$, subject to a cardinality constraint of at most $p$ and an approximation factor of at least $(1/2+\epsilon)$, uses $\Omega(N\epsilon/p^3)$ communication. This also gave a lower bound for streaming submodular maximization.  The $\chain$ problem was used by \cite{FeldmanNSZ22} also for similar purposes, but subject to stronger matroid constraints. \cite{BhoreKO22} used a reduction from $\chain$ to prove lower bounds for interval independent set selection in streams of split intervals. 

We do not know tight bounds for the communication complexity of the $\chain$ problem, despite finding varied applications of it. There is a trivial protocol of $O(n)$ bits, where any player can send the entire string to the next player who holds the index. 
Another simple protocol is for each player to send $O(n/k)$ bits randomly sampled from their strings using public randomness, and with constant probability, in at least one of the $k$ instances, we send the special bit to the player holding the index. However, this still takes $\Omega(n)$ total bits of communication. 

In \cite{CormodeDK19}, they prove a lower bound of $\Omega(n/k^2) $ for any $k \geq 1$  through a reduction from conservative multi-party pointer jumping problem, introduced by \cite{DammJS96}. They state without proof that a stronger lower bound of $\Omega(n/k)$ can be obtained for a restricted range of $k \leq O((n/\log n)^{1/4})$. 
They posed the following conjecture on the optimal communication lower bound. 

\begin{conjecture}[\!\!\cite{CormodeDK19}]\label{conj:open}
	Any protocol that solves $\chain_{n,k}$ requires $\Omega(n)$ bits of communication. 
\end{conjecture}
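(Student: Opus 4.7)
The strategy is to reduce the analysis of the chain to a biased variant of the standard $\indexprob$ problem and then track an information-theoretic potential stage by stage through the protocol. First, I would fix the hard distribution $\mu$ on inputs of $\chain_{n,k}$: sample the answer bit $b \in \{0,1\}$ uniformly, and independently for each $i \in [k]$, sample $\sigma_i \in [n]$ uniformly together with $X_i \in \{0,1\}^n$ uniformly conditioned on $X_i[\sigma_i] = b$. By Yao's principle, it suffices to lower bound the expected communication of any deterministic protocol under $\mu$.

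Next, I would formulate and analyze a biased one-way index problem $\biasedind$: Alice has $X \in \{0,1\}^n$, Bob has $\sigma \in [n]$, and $X[\sigma] = b$, but the joint distribution is skewed so that $b$ has some prior bias $\epsilon$ given Bob's side information. The key lemma to prove is: for Alice's message $M$ to let Bob guess $b$ with success probability $\tfrac12 + \alpha$, we need $|M| = \Omega(n(\alpha^2 - \epsilon^2))$. This follows from the standard bound $I(b; M \mid \sigma) \leq |M|/n$ (using subadditivity of mutual information across the bits of $X$ together with independence of $\sigma$ from $X$), combined with the Taylor expansion of binary entropy around $\tfrac12$ that converts a mutual information bound into a squared-advantage bound via Fano.

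With the biased index lemma in hand, I would analyze the chain via the potential $\Phi_i := I(b ; M_{\le i} \mid \sigma_k)$, where $M_i$ is the message from $P_i$ to $P_{i+1}$. Note that $\Phi_0 = 0$ since $b \perp \sigma_k$, while $\Phi_k = \Omega(1)$ whenever the protocol succeeds with constant advantage (by Fano applied to $P_{k+1}$'s decoder, which uses only $M_k$ and $\sigma_k$). The chain rule of mutual information decomposes
\[
\Phi_k \;=\; \sum_{i=1}^{k} I(b;\, M_i \mid M_{<i}, \sigma_k),
\]
and I would interpret each term as a biased-index sub-instance at stage $i$: the conditioning on $(M_{<i}, \sigma_k)$ induces some random posterior bias $\epsilon_{i-1}$ on $b$, the ``Alice'' role is played by $P_i$ whose input $X_i$ is conditionally independent of the conditioning given $(b, \sigma_i)$, and the message $M_i$ raises the posterior bias to $\epsilon_i$. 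Applying the biased-index bound in expectation over $M_{<i}$ gives $|M_i| \geq \Omega\big(n \cdot \Exp[\epsilon_i^2 - \epsilon_{i-1}^2]\big)$, and summing telescopes to $C = \sum_i |M_i| \geq \Omega\big(n \cdot \Exp[\epsilon_k^2]\big) = \Omega(n)$.

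The main obstacle will be making the per-stage biased-index reduction rigorous. Conditioning on $M_{<i}$ produces a complicated, non-product joint distribution on the remaining variables, so the biased-index lower bound must be lifted from a deterministic prior to an average over random priors in a way that still telescopes cleanly under the chain rule; in particular, one has to verify that $\epsilon_i$ can be defined from the realized transcript so that the per-stage bound holds \emph{in expectation} and that $\Exp[\epsilon_k^2]$ rather than $(\Exp \epsilon_k)^2$ is what the correctness of the protocol forces to be $\Omega(1)$. The restriction $k = o(n / \log n)$ should arise from additive slack of order $(\log n)/n$ incurred at each of the $k$ stages in the information-to-advantage conversion (the Taylor remainder in the binary entropy expansion, or equivalently Pinsker-type losses), which only begins to dominate the main $\Omega(n)$ term once $k \log n = \Theta(n)$.
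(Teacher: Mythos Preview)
Your high-level plan---hard distribution, biased-index lemma, and a stage-by-stage telescoping potential---matches the paper's approach closely. The paper also reduces to a biased $\indexprob$ instance at each step $i$ (conditioning on the transcript so far induces a posterior bias $\theta^{(\gamma_{<i})}$ on the answer bit) and proves that the conditional entropy of the answer drops by at most $O((|M_i|+\log n)/n)$ per step, which is the entropy form of your $|M_i|=\Omega(n(\epsilon_i^2-\epsilon_{i-1}^2))$ claim.

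However, there is a genuine gap in the part you treat as routine. You write that the biased-index bound ``follows from the standard bound $I(b;M\mid\sigma)\le |M|/n$ (using subadditivity of mutual information across the bits of $X$ together with independence of $\sigma$ from $X$).'' In the \emph{biased} setting this independence is false: once $b$ has a nontrivial prior (equivalently, once you condition on $M_{<i}$), the index $\sigma$ becomes correlated with $X$---given $X=x$, the posterior of $\sigma$ is tilted toward coordinates where $x$ agrees with the likely value of $b$. Consequently the step $I(X(\sigma);M\mid\sigma=j)=I(X(j);M)$ fails, and so does the subsequent use of independence of the bits $X(j)$ (they are no longer independent under the biased marginal of $X$). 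This is not a technicality that averaging over priors fixes; proving the biased-index lemma for a \emph{fixed} bias $\theta$ is exactly the paper's main technical contribution.

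The paper resolves this by restricting each $X_i$ to the slice $\mathcal{L}=\{x:\|x\|_1=n/2\}$ and giving an alternate sampling of the biased-index distribution: choose a random set $T\subset[n]$ of size $b=n/(1+2\theta)$, place the $n/2$ ones of $Y$ inside $T$, and pick $\rho$ uniformly from $T$. Conditioned on $T$, the index $\rho$ is now genuinely independent of $(Y,M)$, which lets the standard subadditivity argument go through on the coordinates in $T$; one then checks that $\HH(Y\mid T)\ge b\cdot H_2(1/2+\theta)-O(\log n)$, yielding $\HH(W\mid M,\rho)\ge H_2(1/2+\theta)-\tfrac{2}{n}(\HH(M)+\log n)$. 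Your proposal is missing precisely this decorrelation device (and the balanced-slice choice of hard distribution that enables it). The telescoping and the $k\log n$ slack you anticipate are handled in the paper essentially as you describe, by working directly with entropy rather than passing through a Taylor expansion of $H_2$.
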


\cite{FeldmanNSZ20} made some progress on \Cref{conj:open} by showing that among all the messages sent by the players, there is at least one message with $\Omega(n/k^2)$ bits for every $k \geq 1$. But the original conjecture is still open, and this is the focus of our work. 

\subsection{Our Results}

We settle \Cref{conj:open} almost fully by proving the optimal lower bound of $\Omega(n)$ barring the corner case of when $k$ is too large. As far as we know, this corner case is not a focus for existing reductions from $\chain_{n,k}$.

\begin{theorem}\label{thm:main}
	For any $n, k \geq 1$, any protocol for $\chain_{n,k}$ with probability of success at least 2/3, requires $\Omega(n - k \log n)$ total bits of communication.
\end{theorem}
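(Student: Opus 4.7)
The plan is to follow the two-step strategy outlined in the abstract: first, establish an $\Omega(n)$ lower bound for a biased variant of $\indexprob_n$ using information-theoretic tools; then, reduce $\chain_{n,k}$ to this biased INDEX problem with only $O(k \log n)$ additive overhead. Together, these pieces yield the claimed $\Omega(n - k \log n)$ bound, with the $k \log n$ slack coming entirely from the simulation overhead in the reduction.

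For the biased INDEX step, I would consider a distribution over $(X, \sigma) \in \{0,1\}^n \times [n]$ where $\sigma$ is uniform but $X$ is sampled so that $X[\sigma]$ equals a fixed bit, say $0$, with probability bounded away from $1$ (e.g., $3/4$). Bob must output $X[\sigma]$ with advantage $\Omega(1)$ beyond the trivial baseline of always answering $0$. The argument I have in mind adapts the standard proof for INDEX: decompose $I(M;X)$ across coordinates using the chain rule, and relate the advantage over the biased baseline to the KL divergence between Alice's message distribution when $X[\sigma] = 0$ versus $X[\sigma] = 1$ via a Pinsker-style inequality. Averaging over $\sigma$ and summing over coordinates should yield $\Omega(n)$, with the bias entering only as a multiplicative constant so long as it is bounded away from both $0$ and $1$.

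For the reduction, the natural hard distribution for $\chain_{n,k}$ is: sample $b \in \{0,1\}$ uniformly, each $\ind_i \in [n]$ uniformly, and each $X_i \in \{0,1\}^n$ uniformly conditioned on $X_i[\ind_i] = b$. Given a protocol $\prot$ for $\chain_{n,k}$ of cost $C$, I would construct a biased-INDEX protocol in which the two players embed their INDEX instance as a single consecutive pair $(P_j, P_{j+1})$ of the chain and simulate the remaining players using public randomness. The bias in the reduced INDEX is induced by conditioning on chain consistency: the two players effectively guess $b$ in order to sample the remaining $(X_i, \ind_i)$ satisfying $X_i[\ind_i] = b$, and the resulting INDEX distribution is the conditional distribution when the guess is correct, which is biased toward the guessed value. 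Sending the $O(k)$ pre-sampled indices down the chain contributes the $O(k \log n)$ additive term beyond the $C$ bits of $\prot$.

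The main obstacle lies in engineering the reduction correctly. Under the hard $\chain_{n,k}$ distribution, the marginal of $X_j[\ind_j]$ is uniform—not biased—so the bias in the reduced INDEX instance must be induced by a carefully chosen intermediate conditioning (on a fragment of the pre-sampled inputs, or on a prefix of the transcript). The challenge is to set this up so that simultaneously (i) the induced distribution genuinely has constant bias bounded away from $0$ and $1$, so that Step 1 applies to yield $\Omega(n)$, and (ii) the overhead for simulating the rest of the chain stays at $O(k \log n)$ rather than blowing up to $\Omega(n)$. Handling the corner case $k = \Omega(n/\log n)$ appears genuinely out of reach of this technique, consistent with the range restriction in \Cref{thm:main}.
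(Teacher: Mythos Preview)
Your proposal has a genuine gap at the reduction step. You want to embed the biased-$\indexprob$ instance at a \emph{single} position $(P_j, P_{j+1})$ and argue that the resulting protocol has $\Omega(1)$ advantage beyond the bias, so that your Step~1 lower bound of $\Omega(n)$ applies. But there is no reason any single position carries constant advantage: a $\chain_{n,k}$ protocol may spread its progress evenly, reducing the entropy of the answer by only $O(1/k)$ at each of the $k$ steps. At such a step, the induced $\indexprob$ protocol has only $O(1/k)$ advantage, and the standard information-theoretic bound you sketch gives only $\Omega(n/k^2)$ bits for that step---exactly the barrier discussed in the paper's introduction, which caps the single-embedding/hybrid approach at $\Omega(n/k)$ total. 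Your acknowledged ``challenge (i)'' of ensuring constant bias with constant residual advantage is not a technical wrinkle but the heart of the problem, and it cannot be met by a single embedding.

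The paper's route is structurally different. It does not reduce to one biased-$\indexprob$ instance; instead it runs an induction over all $k$ steps, tracking $\HH(\rZ \mid \Prot_{\leq i})$. At step $i$, conditioned on the transcript prefix $\prot_{<i}$, the answer $\rZ$ has some bias $\theta^{(\prot_{<i})}$ that may be arbitrarily close to $\pm 1/2$, and the key lemma (\Cref{lem:bias-ind}) shows that for \emph{every} bias $\theta$, the entropy of the answer drops by at most $\tfrac{2}{n}(\HH(\rM_i) + \log n)$ after the next message. Summing these drops over $i$ and combining with Fano's inequality yields $s = \Omega(n - k\log n)$. Two things here differ from your plan: the biased-$\indexprob$ statement must bound the \emph{entropy drop} $H_2(1/2+\theta) - \HH(\rW \mid \rMindex,\rrho)$ uniformly over all $\theta$, not just give an $\Omega(n)$ cost at a fixed constant bias (your Pinsker-style argument would degrade as $\theta \to \pm 1/2$); and the $k\log n$ loss arises not from transmitting pre-sampled indices but from the additive $\log n$ incurred in each of the $k$ applications of the biased-$\indexprob$ lemma (ultimately from a Stirling-type estimate in \Cref{clm:entropy-large}). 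The decorrelation of $\rY$ and $\rrho$ in that lemma, via the auxiliary set $\rT$ of size $n/(1+2\theta)$, is the main technical ingredient you are missing.
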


Therefore, as long as $k = o(n / \log n)$, we get the optimal $\Omega(n)$ lower bound from \Cref{thm:main}. 

The proof of \Cref{thm:main} can be found in \Cref{sec:lb}. We prove the lower bound in the more general blackboard model of communication, instead of private messages between players (see \Cref{sec:prelim-cc-model} for details). The main idea is to analyze $\indexprob_n$ where Alice and Bob already have \emph{some prior advantage} in guessing the answer. We elaborate on our techniques in \Cref{sec:overview-techniques}.

As a direct corollary of \Cref{thm:main}, we get improvements in streaming lower bounds in \cite{CormodeDK19,FeldmanNSZ20} through reductions from $\chain$ immediately. In particular, we get that any algorithm which $\alpha$-approximates the size of a maximum independent set in vertex arrival streams requires $\Omega(n^2/\alpha^5- \log n)$ space, while the previous bound was $\Omega(n^2/\alpha^7)$ in \cite{CormodeDK19}.   We present the implications of our result in \Cref{sec:applications}. 

A further generalization of $\chain$, called Augmented Chain was defined in \cite{DarkDK23}. Here, instances of Augmented Index are chained together instead.  Our lower bound of $\Omega(n-k\log n)$ can be extended to Augmented Chain also, and the details are covered in \Cref{subsec:aug-chain}. 

\subsection{Our Techniques}\label{sec:overview-techniques}

In this subsection, we give an overview of the challenges in proving the lower bound and a summary of our techniques. We start by going over the prior techniques.

\paragraph{Prior Techniques.} 
We will briefly talk about the technique used in 
\cite{FeldmanNSZ20} to prove that there is at least one player who sends $\Omega(n/k^2)$ bits for any $k \geq 1$. We will argue that these techniques can be extended to proving a lower bound of $\Omega(n/k)$ for the total number of bits, but not all the way to $\Omega(n)$ bits.

The first step in proving a lower bound for $\chain_{n,k}$ in \cite{FeldmanNSZ20} is a decorrelation step -- the $k$ instances of $\indexprob_{n}$ have the same answer, and they remove this correlation with a hybrid argument. These arguments have been used extensively in the literature (see e.g., \cite{KapralovKS15,AssadiKSY20,KapralovMTWZ22,AssadiS23}). Intuitively, any protocol that tries to solve $\chain_{n,k}$ may attempt to solve ``many" of the instances of $\indexprob_n$, albeit each with a ``small" advantage over 1/2, in the hope that the ``small" advantages may accrue to get a constant probability of success overall (taking advantage of the fact that all the instances of $\indexprob_{n}$ have the same answer). The hybrid argument is a way to reduce proving a lower bound on the overall problem, to proving a lower bound for $k$ different $\indexprob_{n}$ problems against these low advantages. 

Let us assume, for simplicity, that the protocol tries to get an advantage of $\Omega(1/k)$ in each instance of $\indexprob_n$, to get a constant total advantage. We can prove that any protocol that gets an advantage of $\Omega(1/k)$ for $\indexprob_{n}$ uses $\Omega(n/k^2)$ bits of communication using basic tools from information theory \cite{ChakrabartiCKM13} (this is quite standard, see e.g., \cite{Assadi20} for a direct proof), and this is known to be tight. Therefore, for $k$ instances, we get a lower bound of $\Omega(n/k)$ bits in total. Now, we will argue why this is not the optimal lower bound.

On one hand, for $\indexprob_n$, it is known that for any $\delta \in (0,1/2)$, there is a protocol that uses $O(n\delta^2)$ bits of communication to get a probability of success $1/2+\delta$. This means that $\indexprob_{n}$ can be solved with advantage $\Omega(1/k)$ in  $O(n/k^2)$ bits; in other words, each ``hybrid step" of the previous lower bound argument is optimal. So, then, why can we not get a good protocol for $\chain_{n,k}$ by running the protocol for $\indexprob_n$ with $\delta = 1/k$ on all $k$ instances? This protocol would have $O(n/k)$ bits of communication in total. The reason is that the $ 1/k$ small advantages do not add up as we would like them to. To illustrate this, we will briefly talk about the protocol that gets $\delta$ advantage in $O(n\delta^2)$ bits for $\indexprob_n$. This protocol is presented formally in \Cref{app:index-ub} for the sake of completeness.

\paragraph{Protocol for $\indexprob_n$.}
First, we will sketch a protocol for $\delta = 1/\sqrt{n}$ that uses $O(1)$ bits of communication. 
Let us imagine that the input $X$ is chosen uniformly at random from $\{0,1\}^n$ and the index $\ind$ is chosen uniformly at random from $[n]$. Then, Alice finds the majority bit from her string $X$ and sends it to Bob. Bob just outputs the bit sent by Alice. We know, from simple anti-concentration bounds on the binomial distribution, that the number of indices with the majority bit in $X$ is at least $n/2 + c\sqrt{n}$ with constant probability for some appropriate constant $c$. The protocol succeeds if Bob holds the index $\ind$ to a majority bit, and this happens with probability at least $\frac12 + \frac{c}{\sqrt{n}}$. 

The assumption that the input $X$ is chosen uniformly at random from $\{0,1\}^n $ can be removed using public randomness. Alice and Bob collectively sample a random string $A \in \{0,1\}^n$, and Alice changes her input to $X \oplus A$ so that each bit is 0 or 1 with equal probability. Similarly, the assumption that the index $\ind$ is chosen uniformly at random from $[n]$ can be removed by Alice and Bob sampling a random permutation of $[n]$. Alice permutes string $X$ according to this permutation, and Bob changes his input to the index that the permutation maps $\sigma$ to. This is termed as the self-reducibility property of $\indexprob$.

We can also extend the protocol to any $\delta$ by partitioning the string $X$ into $n\delta^2$ blocks at random using public randomness and sending the majority bit in each block. Bob knows the block that his input index $\ind$ belongs to as public randomness is used.

\paragraph{Challenges for $\chain_{n,k}$.}
If we use the protocol we described for $\chain_{n,k}$, we are left with $k$ bits from each instance of $\indexprob_n$, which may be the correct answer to the problem with probability $1/2 + \Theta(1/k)$, but, the variance of each of these bits is $1/4 - \Theta(1/k^2)$. We have a protocol for $\chain_{n,k}$, where, out of the $k$ instances of $\indexprob_{n}$, it finds the right answer for $\approx k/2 + \Theta(1)$ instances in expectation. However, the standard deviation of the number of right answers is $\approx \sqrt{k}/2$, which is enough to mask the $\Theta(1)$ improvement over $k/2$ we get in expectation. If we use a hybrid argument over the total variation distance, each of the smaller ``hybrid steps" is optimal, whereas the overall lower bound is not optimal. We cannot achieve a lower bound stronger than $\Omega(n/k)$. 

\paragraph{Our Solution.}

Instead of keeping track of progress in terms of advantage gained in guessing the answer, we directly keep track of the ``information" the message reveals about the answer. 
Formally, this translates to the change in entropy of the answer, after each successive message.
Initially, the players have no information about the answer, and it is uniform over $\{0,1\}$ (the entropy is 1). Any protocol with a large enough probability of success must reduce the entropy of the answer by a large factor (see Fano's inequality in \Cref{prop:fanos}).
We prove that after each message, the entropy is reduced only by an additive factor \emph{linear in the length of the message}, by a reduction to the $\indexprob$ problem. This will give us a lower bound on the total length of the messages.

For $\indexprob_n$, in any protocol with probability of success at least $\eps$, the entropy of the answer conditioned on the message is at most $H_2(\eps)$ where $H_2 (x) = x \log (1/x) + (1-x) \log (1/(1-x))$ is the binary entropy function. In the standard version where the initial entropy is 1, the reduction in entropy  is $1- H_2(\eps)$, and it is known that the protocol requires   $\Omega(n (1-H_2(\eps)))$ communication \cite{JainRS09}. 
This is not sufficient for our application due to the following reason: after the message of $\player{1}$, when $\player{2}$ and $\player{3}$ attempt to solve  $\indexprob_n$, they already have some prior advantage that the message of $\player{1}$ gives them. Therefore, we need to analyze $\indexprob_{n}$ when the answer is not uniform over $\{0,1\}$.

\paragraph{Biased Index.}
We define the biased index problem, parametrized by $\theta \in [-1/2, 1/2]$. 
 Alice and Bob receive input $Y \in \{0,1\}^n$ and $\rho \in [n]$ respectively, such that the value at position $\rho$ in $Y$ (denoted by $Y(\rho)$) is 1 with probability $1/2 + \theta$ and 0 otherwise. Alice sends a message $M$ to Bob and Bob has to output $Y(\rho)$.  The initial entropy of the answer is $H_2(1/2 + \theta)$. We prove that entropy of $Y(\rho)$ conditioned on $M$ is smaller by at most $O((\card{M}+\log n)/n)$ compared to the initial $H_2 (1/2 + \theta)$, which is \textbf{our main contribution} (see \Cref{lem:bias-ind}).

We can show that the entropy of input to Alice, i.e. the random string $Y$, in such a distribution is at least $\Omega(n \cdot H_2 (1/2 + \theta))$.
Hence, after a message of length $s$ from Alice, the entropy of string $Y$ reduces to $ \Omega(n \cdot H_2 (1/2 + \theta) - s)$. For a randomly chosen position in $Y$ after conditioning on the message, the entropy is at least $ \approx H_2(1/2 + \theta) - s/n$, which gives a lower bound on $s$. 

In the distribution given to Alice and Bob, however, $\rho$ is \emph{not} chosen uniformly at random, and in fact, is correlated with the distribution of $Y$.  Such versions of index where the distributions of Alice and Bob are correlated have been studied before (see Sparse Indexing in Appendix A of \cite{AssadiKL16} and Section 3.3 of \cite{Saglam19}). This correlation is the main issue in analyzing biased index with information theoretic tools. 

Adapted from the approach in Appendix A of \cite{AssadiKL16}, we restrict the randomness in $Y$ to a fixed set of indices of a carefully chosen size (based on $\theta$), and break this correlation. The loss in entropy of $Y$ is not significant enough to hinder us, and the restriction then allows us to use standard information theoretic tools to analyze biased index (see \Cref{subsec:bias-ind} for more details). 

 \paragraph{Independent and Concurrent Work.}
 
Independently and concurrently of this work, \cite{HuangMYZ24} made progress on \Cref{conj:open}. They showed a lower bound of $\Omega(n/k+ \sqrt{n})$ for oblivious protocols (where the length of the message sent by each player does not depend on the input), and a lower bound of $\Omega(n/k-k)
$ for general protocols. We show a lower bound of $\Omega(n - k \log n)$ for all protocols.\footnote{The authors of \cite{HuangMYZ24} pointed out an important flaw in the arguments of an earlier version of this work which was posted at around the same time as \cite{HuangMYZ24}. This flaw was subsequently fixed in the current version using a global change to the original argument, which now recovers optimal result for $k = o(n/\log n)$.} 

Quantitatively, our lower bounds are a factor of almost $k$ stronger, and are optimal for $k = o(n/\log n)$; moreover, our lower bound also holds for the Augmented Chain problem of \cite{DarkDK23}. In terms of techniques, however, the two works are entirely disjoint: their proof is based on a new method of analysis through min-entropy and we use information theoretic approaches. 

\section{Preliminaries}\label{sec:prelim}

In this section, we will present the required notation and definitions for our proof. 

\paragraph{Notation.}
For any tuple $A = (A_1, A_2, \ldots, A_m)$ of $m$ items, we use $A_{<i} $ to denote the tuple $(A_1, A_2, \ldots, A_{i-1})$ for all $i \in [m]$. We use sans-serif font to denote random variables. For any random variable $\rA$, we use $A \sim \rA$ to denote any $A$ sampled from the distribution of the random variable $\rA$. 

For any string $X \in \{0,1\}^n$, we use $X(\sigma)$ to denote the bit at position $\sigma$ in $X$ for $\sigma \in [n]$. We use $X(<\sigma)$ to denote the string of $\sigma-1$ bits preceding $X(\sigma)$ in $X$.  We use $X(S)$  for any $S \subseteq [n]$ to denote the bits at positions in set $S$. 


For any $x \in [0,1]$, we use $H_2: [0,1] \rightarrow [0,1]$ to denote the binary entropy function. 
\[
	H_2(x) = -x \log x - (1-x) \log (1-x).
\]

We need the following standard approximation of binomial coefficients (see  Lemma 7 of Chapter 10 in \cite{MacWilliamsS77}). 

\begin{fact}[c.f. \cite{MacWilliamsS77}]\label{fact:binom-lower-bound}
	For any $p \geq 1$ and any $q \in[p-1]$, we have,
	\[
		2^{p \cdot H_2(q/p)}\cdot \sqrt{\frac{n}{8\pi q(p-q)}} \leq \binom{p}{q} \leq 2^{p \cdot H_2(q/p)}\cdot \sqrt{\frac{n}{2\pi q(p-q)}}
	\]
\end{fact}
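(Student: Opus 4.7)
The plan is to derive both inequalities directly from a two-sided Stirling estimate of the factorial and then extract the entropy exponent by an algebraic identity; reading the ``$n$'' under the square root as the intended ``$p$'' (the total in $\binom{p}{q}$), this is a completely standard calculation. I would start from the textbook Stirling bracket
\[
\sqrt{2\pi n}\,\Paren{n/e}^n \leq n! \leq \sqrt{2\pi n}\,\Paren{n/e}^n \cdot e^{1/(12n)},
\]
valid for every integer $n\geq 1$. Applying this to the three factorials in $\binom{p}{q} = p!/(q!(p-q)!)$ eliminates the $e^{-p}$ factors (since $e^{-p}/(e^{-q} e^{-(p-q)})=1$) and produces a leading polynomial factor $\sqrt{p/(2\pi q(p-q))}$ multiplied by the power term $p^p/(q^q(p-q)^{p-q})$.

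The key algebraic step is then to rewrite the power term as a clean exponential in the binary entropy. Using $\log(p^p/(q^q(p-q)^{p-q})) = -q\log(q/p) - (p-q)\log((p-q)/p) = p\cdot H_2(q/p)$ yields the $2^{p\cdot H_2(q/p)}$ factor exactly. For the upper bound on $\binom{p}{q}$, I would use the Stirling upper bound on $p!$ in the numerator and the Stirling lower bound on $q!(p-q)!$ in the denominator, which introduces a multiplicative slack of $e^{1/(12p)} \leq e^{1/12}$; for the lower bound I would flip these choices, incurring a factor $e^{-1/(12q)-1/(12(p-q))} \geq e^{-1/6} \geq 1/2$ whenever $q, p-q \geq 1$.

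Assembling the pieces, the upper bound follows directly since the $e^{1/12}$ slack is comfortably absorbed by the $\sqrt{1/(2\pi)}$ prefactor; for the lower bound, the factor $1/2$ corresponds exactly to the $\sqrt{1/(8\pi)} = (1/2)\sqrt{1/(2\pi)}$ appearing under the square root. The only step that requires genuine care is verifying that the Stirling correction terms fit inside the gap between $2\pi$ and $8\pi$ uniformly over the allowed range of $q$ and $p$, and this reduces to the single numerical inequality $e^{-1/6} \geq 1/2$. Everything else is pure bookkeeping: matching the exponential $e$-factors across numerator and denominator, and converting $p^p/(q^q(p-q)^{p-q})$ into a power of two via the entropy identity. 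I therefore expect no genuine obstacle beyond this constant chasing.
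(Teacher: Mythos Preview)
The paper does not prove this statement at all: it is recorded as a \emph{Fact} with a pointer to MacWilliams--Sloane, so there is no ``paper's own proof'' to compare against. Your Stirling-based derivation is exactly the standard route to these bounds, and your reading of the ``$n$'' under the radical as a typo for ``$p$'' is correct.

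One small slip to fix: your justification of the upper bound does not quite work as written. After applying your Stirling bracket you obtain
\[
\binom{p}{q}\;\leq\;\sqrt{\frac{p}{2\pi q(p-q)}}\cdot 2^{p\,H_2(q/p)}\cdot e^{1/(12p)},
\]
and there is no ``$\sqrt{1/(2\pi)}$ prefactor'' left over to absorb the $e^{1/(12p)}>1$ slack --- that prefactor is already the one appearing in the target bound. The clean fix is to use the sharper Robbins lower bound $n!\geq \sqrt{2\pi n}\,(n/e)^n e^{1/(12n+1)}$ on $q!$ and $(p-q)!$; then the combined error factor becomes $\exp\!\big(\tfrac{1}{12p}-\tfrac{1}{12q+1}-\tfrac{1}{12(p-q)+1}\big)\leq 1$ for all $1\leq q\leq p-1$ (check: with $q\leq p/2$, already $\tfrac{1}{12q+1}\geq \tfrac{1}{6p+1}\geq \tfrac{1}{12p}$), and the stated upper bound follows exactly. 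Your lower-bound argument via $e^{-1/6}\geq 1/2$ is fine as is.
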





\subsection{Communication Complexity Model}\label{sec:prelim-cc-model}
We use the standard number-in-hand multi-party model of communication. Only the basic definitions are given in this subsection. More details can be found in textbooks on communication complexity \cite{RaoY20,KushilevitzN97}.

For any $k \geq 1$, let $f$ be a function from $\cA_1 \times \cA_2 \times \ldots \times \cA_k$ to $\{0,1\}$.
There are $k$ players $\player{1}, \player{2}, \ldots, \player{k}$ where $\player{i}$ gets an input $a_i \in \cA_i $ for $i \in [k]$. 
There is a shared blackboard visible to all the players. The players have access to a shared tape of random bits, along with their own private randomness. 
In any protocol $\prot$ for $f$, the players send a message to the blackboard in increasing order ($\player{1}$ sends a message followed by $\player{2}$, and so on till $\player{k}$). The last player $\player{k}$, after all the messages are sent, outputs a single bit denoted by $ \prot(a_1, a_2, \ldots, a_k)$. 
Protocol $\prot$ is said to solve $f$ with probability of success at least $1-\delta$ if, for all $i \in [k]$, for any choice of $a_i \in \cA_i$, we have,
\[
	\Pr[\prot(a_1, a_2, \ldots, a_k) \neq f(a_1, a_2, \ldots, a_k)] \leq \delta.  
\]

The \textbf{communication cost} of a protocol $\prot$ is defined as the worst case \textbf{total communication} of all the players on the blackboard at the end of the protocol.


%

\begin{definition}\label{def:rand-cc}
	 The \textnormal{\textbf{randomized communication complexity}} of $f$, with probability of error $\delta$,  is defined as the minimum communication cost of any protocol which solves $f$ with probability of success at least $1-\delta$.
\end{definition}

\subsection{Information Theoretic Tools}\label{sec:prelim-info-theory}

Our proof relies on tools from information theory, and we state the basic definitions and the inequalities we need in this section. 
Proofs of the statements and more details can be found in Chapter 2 of a textbook on information theory by Cover and Thomas \cite{CoverT06}.

\begin{definition}[Shannon Entropy]\label{def:entropy}
	For any random variable $\rX$ over support $\cA$, the Shannon entropy of $\rX$, denoted by $\HH(\rX)$ is defined as,
	\[
	\HH(\rX) = \sum_{A \in \cA} \Pr[\rX = A] \cdot \log(1/ \Pr[\rX = A]).
	\]
	For any event $\cE$ we define $\HH(\rX \mid \cE)$ in the same way, as the entropy of distribution of $\rX$ conditioned on the event $\cE$.
	For any two random variables $\rX$ and $\rY$, the entropy of $\rX$ conditioned on $\rY$, denoted by $\HH(\rX \mid \rY)$ is defined as,
	\[
	\HH(\rX \mid \rY) = \Exp_{Y \sim \rY} \HH(\rX \mid \rY = Y).
	\]
\end{definition}

\begin{fact}\label{fact:en-mi-facts}
	We know the following about entropy and mutual information:
	\begin{enumerate}
		\item \label{part:entropy-bounds} 	For any random variable $\rX$, the entropy obeys the bound: $0 \leq \HH(\rX) \leq \log_2(\card{\cX})$ where $\cX$ is the support of $\rX$.
		\item \label{part:cond-reduce-entropy} For any two random variables $\rX, \rY$, $\HH(\rX \mid \rY) \leq \HH(\rX)$ with equality holding iff $\rX \perp \rY$.
		\item \label{part:chain-rule-entropy} Chain Rule of Entropy: For $m \geq 1$ and any tuple of random variables $\rX = (\rX_1, \rX_2, \ldots, \rX_m)$, $\en{\rX}= \sum_{i \in [m]} \en{\rX_i \mid \rX_{<i}}$.
		\item \label{part:entropy-subadditive} Subadditivity of entropy: For $m \geq 1$ and any tuple of random variables $\rX = (\rX_1, \rX_2, \ldots, \rX_m)$, $\en{\rX}\leq  \sum_{i \in [m]} \en{\rX_i}$.
	\end{enumerate}
\end{fact}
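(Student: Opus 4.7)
The plan is to prove each part directly from the definition of Shannon entropy in \Cref{def:entropy}, together with convexity/Jensen-type inequalities. For part \ref{part:entropy-bounds}, I would first observe that every summand $\Pr[\rX = A]\cdot \log(1/\Pr[\rX=A])$ is nonnegative for $\Pr[\rX=A]\in[0,1]$, giving $\HH(\rX)\geq 0$ immediately. For the upper bound I would rewrite $\HH(\rX)=\Exp_{A\sim \rX}[\log(1/\Pr[\rX=A])]$ and apply Jensen's inequality to the concave function $\log$ to get $\HH(\rX) \leq \log\bigl(\Exp_{A\sim \rX}[1/\Pr[\rX=A]]\bigr) = \log\card{\cX}$.

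For part \ref{part:cond-reduce-entropy}, I would introduce the mutual information $\II(\rX;\rY):=\HH(\rX)-\HH(\rX\mid\rY)$ and rewrite it as the KL divergence $\kl{p_{\rX,\rY}}{p_{\rX}\cdot p_{\rY}}$ between the joint distribution and the product of marginals. Non-negativity of KL then follows from Jensen applied to $-\log$ against the ratio $p_{\rX}p_{\rY}/p_{\rX,\rY}$; strict convexity of $-\log$ gives that equality holds iff $p_{\rX,\rY}=p_{\rX}\cdot p_{\rY}$, which is exactly $\rX\perp\rY$. For part \ref{part:chain-rule-entropy}, I would expand $\HH(\rX_1,\ldots,\rX_m)$ using the definition, factor the joint probability as $\Pr[\rX_1]\cdot\Pr[\rX_2\mid\rX_1]\cdots\Pr[\rX_m\mid\rX_{<m}]$, turn the product inside the logarithm into a sum, and regroup to recognize each summand as $\HH(\rX_i\mid \rX_{<i})$. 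Alternatively, one can induct on $m$ using just the two-variable identity $\HH(\rA,\rB)=\HH(\rA)+\HH(\rB\mid\rA)$, which is a one-line calculation from the definition. Part \ref{part:entropy-subadditive} then drops out by combining parts \ref{part:cond-reduce-entropy} and \ref{part:chain-rule-entropy}: apply the chain rule to get $\HH(\rX)=\sum_i \HH(\rX_i\mid \rX_{<i})$ and bound each term by $\HH(\rX_i)$ using that conditioning cannot increase entropy.

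There is no real obstacle here — all four parts are textbook consequences of the convexity of $-p\log p$, and the only step requiring a little care is the equality condition in part \ref{part:cond-reduce-entropy}, where I would invoke strict convexity of $-\log$ (or the standard characterization of equality in Jensen's inequality) to conclude that vanishing KL divergence forces $p_{\rX,\rY}=p_{\rX}p_{\rY}$ almost everywhere. Everything else is bookkeeping of sums of logarithms, so in a self-contained write-up I would present parts \ref{part:entropy-bounds} and \ref{part:chain-rule-entropy} by direct calculation and derive parts \ref{part:cond-reduce-entropy} and \ref{part:entropy-subadditive} from them, which is why the paper simply cites \cite{CoverT06} rather than reproducing the argument.
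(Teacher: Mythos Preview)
Your proposal is correct and follows the standard textbook approach; the paper does not give its own proof of this fact but simply states it and refers the reader to Chapter~2 of Cover and Thomas~\cite{CoverT06}, which contains precisely the Jensen/convexity arguments you outline.
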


%

We also need the following proposition, which relates entropy to the probability of correctness while estimating a random variable. 

\begin{proposition}[Fano's inequality]\label{prop:fanos}
	Given a binary random variable $\rX$ and an estimator random variable $Y$ and a function $g$ such that $g(Y) = X$ with probability at least $1-\delta$ for $\delta < 1/2$, 
	\[
		\en{\rX \mid \rY} \leq H_2(\delta).
	\]
\end{proposition}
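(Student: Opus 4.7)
The plan is to run the textbook proof of Fano's inequality, specialized to the binary case which is particularly clean because being wrong determines the answer. I will introduce an auxiliary error indicator and expand a joint entropy in two different ways using the chain rule (\itfacts{chain-rule-entropy}).

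Concretely, define $\rE$ to be the indicator of the event $g(\rY) \neq \rX$, so $\Pr[\rE = 1] \leq \delta$. Note that $\rE$ is a deterministic function of the pair $(\rX, \rY)$ through $g$, so $\en{\rE \mid \rX, \rY} = 0$. Applying the chain rule to $\en{\rE, \rX \mid \rY}$ in the two possible orders gives
\[
\en{\rX \mid \rY} + \en{\rE \mid \rX, \rY} \;=\; \en{\rE, \rX \mid \rY} \;=\; \en{\rE \mid \rY} + \en{\rX \mid \rE, \rY}.
\]
Since the left-hand side equals $\en{\rX \mid \rY}$, it suffices to bound the two terms on the right.

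For the first term, conditioning reduces entropy (\itfacts{cond-reduce-entropy}), so $\en{\rE \mid \rY} \leq \en{\rE} = H_2(\Pr[\rE = 1]) \leq H_2(\delta)$, where the last step uses that $H_2$ is monotone increasing on $[0, 1/2]$ together with the assumption $\delta < 1/2$. For the second term, this is where the binary nature of $\rX$ is essential: conditioned on $\rE = 0$, we have $\rX = g(\rY)$ which is determined by $\rY$, so $\en{\rX \mid \rE = 0, \rY} = 0$; conditioned on $\rE = 1$, we have $\rX \neq g(\rY)$, and since $\rX$ takes only two values, $\rX$ is again determined by $\rY$ as the bit opposite to $g(\rY)$, so $\en{\rX \mid \rE = 1, \rY} = 0$. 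Averaging over $\rE$ gives $\en{\rX \mid \rE, \rY} = 0$, and combining the bounds yields $\en{\rX \mid \rY} \leq H_2(\delta)$.

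There is no real obstacle here: the only subtlety is making sure one exploits that $\rX$ is binary (so that knowing the error event plus $\rY$ pins down $\rX$ exactly), which is why the right-hand side of the inequality is $H_2(\delta)$ rather than $H_2(\delta) + \delta \log(\lvert \text{supp}(\rX)\rvert - 1)$ as in the general form of Fano's inequality. The monotonicity of $H_2$ on $[0, 1/2]$ is the only place the hypothesis $\delta < 1/2$ gets used.
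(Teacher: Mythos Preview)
Your proof is correct and is exactly the standard textbook argument for the binary case of Fano's inequality. The paper does not supply its own proof of this proposition but simply cites Cover and Thomas, so your write-up is precisely the argument the reader is being referred to.
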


This concludes our preliminaries section.

\newcommand{\rvM}{\rv{M}}
\newcommand{\rvind}{\ensuremath{\bm{\ind}}}
\renewcommand{\Prot}{\ensuremath{\Gamma}}
\renewcommand{\prot}{\ensuremath{\gamma}}
\newcommand{\protindex}{\ensuremath{\pi_{\textsc{index}}}}
\newcommand{\cDindex}{\ensuremath{\cD_{\textsc{index}}}}
\newcommand{\Mindex}{\ensuremath{M_{\textsc{index}}}}
\newcommand{\rMindex}{\ensuremath{\rv{M}_{\textsc{index}}}}
\newcommand{\rT}{\rv{T}}
\newcommand{\rS}{\rv{S}}

\section{The Lower Bound}\label{sec:lb}

In this section, we will prove our lower bound of $\Omega(n)$ on the communication complexity of the $\chain_{n,k}$ problem for $k  = o(n/\log n)$. 
Let us formally define the $\chain_{n,k}$ communication problem first.

\begin{definition}\label{def:chain}
	The $\chain_{n,k}$ communication problem is defined as follows. 
	Given $k+1$ players $\player{i}$ for $i \in [k+1]$  where,
	\begin{itemize}
		\item $\player{i}$ has a string $X_i \in \set{0,1}^n$ for each $i \in [k]$, and,
		\item $\player{i}$ for $1 < i \leq k+1$ has an index $\ind_{i-1} \in [n]$,
	\end{itemize}
	such that,
	\[
	X_i(\ind_{i}) = z,
	\]
	for some bit $z \in \set{0,1}$. 
	The players have a blackboard visible to all the parties. For $i \in [k]$ in ascending order, $\player{i}$ sends a single message $M_i$, after which the index $\ind_i$ is revealed to the blackboard at no cost. $\player{k+1}$ has to output whether $z$ is 0 or 1. 
	Refer to \Cref{fig:board-chain} for an illustration. 
\end{definition}

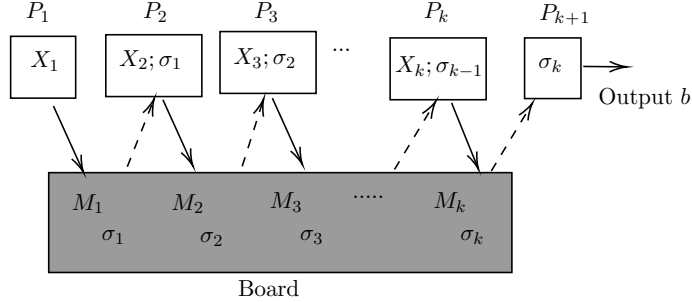
\begin{figure}[h!]
	\centering
	\scalebox{.8}{\tikzset{every picture/.style={line width=0.75pt}} 

\begin{tikzpicture}[x=0.75pt,y=0.75pt,yscale=-1,xscale=1]
	
	\draw   (132,131) -- (173,131) -- (173,170) -- (132,170) -- cycle ;
	\draw   (192,130) -- (252.55,130) -- (252.55,169) -- (192,169) -- cycle ;
	\draw  [fill={rgb, 255:red, 155; green, 155; blue, 155 }  ,fill opacity=0.23 ] (156,217) -- (448,217) -- (448,280) -- (156,280) -- cycle ;
	\draw    (228,171) -- (246.18,211.18) ;
	\draw [shift={(247,213)}, rotate = 245.66] [color={rgb, 255:red, 0; green, 0; blue, 0 }  ][line width=0.75]    (10.93,-3.29) .. controls (6.95,-1.4) and (3.31,-0.3) .. (0,0) .. controls (3.31,0.3) and (6.95,1.4) .. (10.93,3.29)   ;
	\draw    (410,173) -- (426.25,213.15) ;
	\draw [shift={(427,215)}, rotate = 247.96] [color={rgb, 255:red, 0; green, 0; blue, 0 }  ][line width=0.75]    (10.93,-3.29) .. controls (6.95,-1.4) and (3.31,-0.3) .. (0,0) .. controls (3.31,0.3) and (6.95,1.4) .. (10.93,3.29)   ;
	\draw   (456,132) -- (491.55,132) -- (491.55,170.27) -- (456,170.27) -- cycle ;
	\draw  [dash pattern={on 4.5pt off 4.5pt}]  (205.55,213.27) -- (221.81,172.13) ;
	\draw [shift={(222.55,170.27)}, rotate = 111.57] [color={rgb, 255:red, 0; green, 0; blue, 0 }  ][line width=0.75]    (10.93,-3.29) .. controls (6.95,-1.4) and (3.31,-0.3) .. (0,0) .. controls (3.31,0.3) and (6.95,1.4) .. (10.93,3.29)   ;
	\draw    (297,170) -- (315.18,210.18) ;
	\draw [shift={(316,212)}, rotate = 245.66] [color={rgb, 255:red, 0; green, 0; blue, 0 }  ][line width=0.75]    (10.93,-3.29) .. controls (6.95,-1.4) and (3.31,-0.3) .. (0,0) .. controls (3.31,0.3) and (6.95,1.4) .. (10.93,3.29)   ;
	\draw  [dash pattern={on 4.5pt off 4.5pt}]  (278,213) -- (290.95,172.18) ;
	\draw [shift={(291.55,170.27)}, rotate = 107.59] [color={rgb, 255:red, 0; green, 0; blue, 0 }  ][line width=0.75]    (10.93,-3.29) .. controls (6.95,-1.4) and (3.31,-0.3) .. (0,0) .. controls (3.31,0.3) and (6.95,1.4) .. (10.93,3.29)   ;
	\draw  [dash pattern={on 4.5pt off 4.5pt}]  (374,215) -- (398.97,173.71) ;
	\draw [shift={(400,172)}, rotate = 121.16] [color={rgb, 255:red, 0; green, 0; blue, 0 }  ][line width=0.75]    (10.93,-3.29) .. controls (6.95,-1.4) and (3.31,-0.3) .. (0,0) .. controls (3.31,0.3) and (6.95,1.4) .. (10.93,3.29)   ;
	\draw  [dash pattern={on 4.5pt off 4.5pt}]  (435,216) -- (459.97,174.71) ;
	\draw [shift={(461,173)}, rotate = 121.16] [color={rgb, 255:red, 0; green, 0; blue, 0 }  ][line width=0.75]    (10.93,-3.29) .. controls (6.95,-1.4) and (3.31,-0.3) .. (0,0) .. controls (3.31,0.3) and (6.95,1.4) .. (10.93,3.29)   ;
	\draw    (159,175) -- (177.18,215.18) ;
	\draw [shift={(178,217)}, rotate = 245.66] [color={rgb, 255:red, 0; green, 0; blue, 0 }  ][line width=0.75]    (10.93,-3.29) .. controls (6.95,-1.4) and (3.31,-0.3) .. (0,0) .. controls (3.31,0.3) and (6.95,1.4) .. (10.93,3.29)   ;
	\draw   (264,128) -- (324.55,128) -- (324.55,167) -- (264,167) -- cycle ;
	\draw   (371.23,132) -- (431.77,132) -- (431.77,171) -- (371.23,171) -- cycle ;
	\draw    (493,150) -- (518.55,150.25) ;
	\draw [shift={(520.55,150.27)}, rotate = 180.57] [color={rgb, 255:red, 0; green, 0; blue, 0 }  ][line width=0.75]    (10.93,-3.29) .. controls (6.95,-1.4) and (3.31,-0.3) .. (0,0) .. controls (3.31,0.3) and (6.95,1.4) .. (10.93,3.29)   ;
	
	\draw (143,139) node [anchor=north west][inner sep=0.75pt]   [align=left] {$\displaystyle X_{1}$};
	\draw (199,138) node [anchor=north west][inner sep=0.75pt]   [align=left] {$\displaystyle X_{2} ;\sigma _{1}$};
	\draw (333,137) node [anchor=north west][inner sep=0.75pt]   [align=left] {$\displaystyle ...$};
	\draw (140,108) node [anchor=north west][inner sep=0.75pt]   [align=left] {$\displaystyle P_{1}$};
	\draw (214,108) node [anchor=north west][inner sep=0.75pt]   [align=left] {$\displaystyle P_{2}$};
	\draw (391,108) node [anchor=north west][inner sep=0.75pt]   [align=left] {$\displaystyle P_{k}$};
	\draw (274,283) node [anchor=north west][inner sep=0.75pt]   [align=left] {Board};
	\draw (464,137) node [anchor=north west][inner sep=0.75pt]   [align=left] {$ $};
	\draw (464,109) node [anchor=north west][inner sep=0.75pt]   [align=left] {$\displaystyle P_{k}{}_{+}{}_{1}$};
	\draw (284,108) node [anchor=north west][inner sep=0.75pt]   [align=left] {$\displaystyle P_{3}$};
	\draw (271,137) node [anchor=north west][inner sep=0.75pt]   [align=left] {$\displaystyle X_{3} ;\sigma _{2}$};
	\draw (373.23,142) node [anchor=north west][inner sep=0.75pt]   [align=left] {$\displaystyle X_{k} ;\sigma _{k-1}$};
	\draw (463.23,143) node [anchor=north west][inner sep=0.75pt]   [align=left] {$\displaystyle \sigma _{k}$};
	\draw (169,229) node [anchor=north west][inner sep=0.75pt]   [align=left] {$\displaystyle M_{1}$};
	\draw (188,252) node [anchor=north west][inner sep=0.75pt]   [align=left] {$\displaystyle \sigma _{1}$};
	\draw (232,229) node [anchor=north west][inner sep=0.75pt]   [align=left] {$\displaystyle M_{2}$};
	\draw (250,253) node [anchor=north west][inner sep=0.75pt]   [align=left] {$\displaystyle \sigma _{2}$};
	\draw (294,228) node [anchor=north west][inner sep=0.75pt]   [align=left] {$\displaystyle M_{3}$};
	\draw (313,252) node [anchor=north west][inner sep=0.75pt]   [align=left] {$\displaystyle \sigma _{3}$};
	\draw (397,228) node [anchor=north west][inner sep=0.75pt]   [align=left] {$\displaystyle M_{k}$};
	\draw (346,230) node [anchor=north west][inner sep=0.75pt]   [align=left] {$\displaystyle .....$};
	\draw (414,252) node [anchor=north west][inner sep=0.75pt]   [align=left] {$\displaystyle \sigma _{k}$};
	\draw (501,162) node [anchor=north west][inner sep=0.75pt]   [align=left] {Output $\displaystyle b$};

\end{tikzpicture}}\caption{An illustration of the $\chain_{n,k}$ problem from \Cref{def:chain}. The solid arrows illustrate that player $\player{i}$ writes a message $M_i$ to the board. The dashed arrows indicate that $\player{i}$ can read the contents of the board. It also shows the order in which the messages are sent by the players and indices are released. } \label{fig:board-chain}
\end{figure}

Let us recall the statement of our main result. 

\begin{restate}[\Cref{thm:main}]
For any $n, k \geq 1$, any protocol for $\chain_{n,k}$ with probability of success at least 2/3, requires $\Omega(n - k \log n)$ total bits of communication.
\end{restate}

We give our hard distribution for $\chain_{n,k}$ in \Cref{subsec:lb-define} and give the proof of \Cref{thm:main} in \Cref{subsec:proof-lb} except for the analysis of biased index, which is given in \Cref{subsec:bias-ind}. Lastly, we extend the arguments to Augmented Chain in \Cref{subsec:aug-chain}. 



\subsection{Setting Up the Problem}\label{subsec:lb-define}

In this subsection, we start by defining the notation for our proof, and we describe the input distributions to $\chain_{n,k}$.

The input hard distribution is as follows. Let $\cLhalf \subset \{0,1\}^n$ be the subset of strings where the number of ones is exactly equal to $n/2$.

\begin{tbox}
	Distribution $\cD$ for $\chain_{n,k}$:
	\begin{enumerate}
		\item Pick a bit $z$ uniformly at random from $\set{0,1}$.
		\item For each $i \in [k]$, sample $(X_i, \ind_i)$ uniformly at random from $\cLhalf \times [n]$ and independently conditioned on $X_i(\ind_i) = z$. 
	\end{enumerate}
\end{tbox}

\paragraph{Notation.}
We use $\rX_i$ to denote the random variable corresponding to string $X_i$ and $\rvind_i$ to denote the random variable corresponding to index $\ind_i$ for $i \in [k]$.  To denote the random variable corresponding to the first $i-1$ strings and indices, we use $\rX_{<i}$ and $\rvind_{<i}$ respectively. 

We use $\rM_i$ to denote the random variable corresponding to the message $M_i$ sent by $\player{i}$ to the blackboard. We use $M =  (M_1, M_2, \ldots, M_k)$ to denote the tuple containing the messages of all the players, and $\rM$ to denote the random variable corresponding to $M$. 

Let $\pi$ be a deterministic protocol for $\chain_{n,k}$ with probability of success at least $2/3$ when the input is distributed according to $\cD$.  We use $\Prot$ to denote the random variable corresponding to the contents of the blackboard (referred to as a \textbf{transcript}), and we use $\prot$ to also denote transcripts sampled from $\Prot$. We use $\Prot_i$ to denote the random variable of the tuple $(M^i, \ind^i)$ for $i \in [k]$. 
We use $\pi(\prot_{<i}, X_{i})$ to denote the output of $\player{i}$ when the contents of the blackboard are $\prot_{<i}$ and the input is $X_i$ for $i \in [k]$. 

Let $s$ be the total length of all the messages sent by the players in $\prot$. We assume that the total length of the messages is exactly $s$ by padding. For any random variable $\rA$, we use $\cD(\rA)$ to denote the distribution of the random variable $\rA$, as the input is distributed according to $\cD$. We replace $\cD(\rA \mid \rB = b)$ with $\cD(\rA \mid b)$ for ease of readability whenever it is clear from context. 

Let $\rZ$ denote the random variable corresponding to bit $z$. 
The bit $z$ corresponds to the answer to $\chain_{n,k}$. 
We show the lower bound of $\Omega(n-k \log n)$ for distinguishing between the case when $z = 0$ and $z = 1$ based on the contents of the blackboard.

We need one important observation about the distribution $\cD$. 

\begin{observation}\label{obs:indep-sigma-Z}
	For any $i \in [k]$, random variable $\rX_{i}, \rvind_{i}$ is independent of $\Prot_{< i}$ conditioned on $\rZ$. 
\end{observation}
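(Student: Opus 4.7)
The plan is to unpack both sides of the claimed independence and exhibit $\Prot_{<i}$ as a deterministic function of the first $i-1$ input pairs, after which the conclusion follows directly from the product structure of $\cD$ given $\rZ$.

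First, I would reread the hard distribution. By construction, $\cD$ first samples $z \in \{0,1\}$ uniformly, and then for each $j \in [k]$ \emph{independently} samples $(X_j, \sigma_j)$ from $\cLhalf \times [n]$ conditioned on $X_j(\sigma_j) = z$. Consequently, conditioned on $\rZ = z$, the pairs $(\rX_1, \rvind_1), \ldots, (\rX_k, \rvind_k)$ are mutually independent. In particular, for the index $i$ of interest,
\[
    \cD\bigl((\rX_i, \rvind_i) \mid \rZ = z,\ \rX_{<i} = X_{<i},\ \rvind_{<i} = \sigma_{<i}\bigr)
    = \cD\bigl((\rX_i, \rvind_i) \mid \rZ = z\bigr).
\]

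Next, I would show that $\Prot_{<i}$ is a deterministic function of $(\rX_{<i}, \rvind_{<i})$. Recall that $\Prot_{<i}$ consists of the messages $M_1, \ldots, M_{i-1}$ together with the released indices $\sigma_1, \ldots, \sigma_{i-1}$. Since $\pi$ is deterministic, $M_1$ is a function of $X_1$ alone; $\sigma_1$ is then released onto the board; and recursively, $M_j$ for $1 < j < i$ is a function of $X_j$, $\sigma_{j-1}$, and the preceding board contents $\Prot_{<j}$, hence by induction a function of $(X_{\le j}, \sigma_{<j})$. Therefore $\Prot_{<i}$ is determined by $(X_{<i}, \sigma_{<i})$.

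Combining the two facts, any event of the form $\{\Prot_{<i} = \prot_{<i}\}$ is, under the conditioning on $\rZ = z$, an event in the $\sigma$-algebra generated by $(\rX_{<i}, \rvind_{<i})$, which is independent of $(\rX_i, \rvind_i)$ given $\rZ = z$. Hence
\[
    \cD\bigl((\rX_i, \rvind_i) \mid \rZ = z,\ \Prot_{<i} = \prot_{<i}\bigr)
    = \cD\bigl((\rX_i, \rvind_i) \mid \rZ = z\bigr),
\]
which is exactly the claimed conditional independence. I do not foresee a genuine obstacle here: the main care needed is in spelling out the inductive dependency of the board contents on only the first $i-1$ input pairs; once that is in hand, the conclusion is immediate from the product-form of $\cD \mid \rZ$.
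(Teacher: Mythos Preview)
Your proof is correct and follows essentially the same approach as the paper: both argue that, conditioned on $\rZ$, the pair $(\rX_i,\rvind_i)$ is independent of all other input pairs, and since $\Prot_{<i}$ is determined by $(\rX_{<i},\rvind_{<i})$, the conclusion follows. The paper's version is simply a terser rendering of your argument, omitting the explicit induction on the board contents.
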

\begin{proof}
	For any fixed value of $\rZ$, $\rX_{i}, \rvind_{i}$ are chosen uniformly at random from $\cLhalf \times [n]$ such that $\rX_{i}(\rvind_{i}) = \rZ$. This choice is independent of any $\rX_j, \rvind_j$ with $j \neq i$, and thus independent of $\Prot_{<i}$. 
\end{proof}

We are ready to proceed with the proof of our main theorem.
%

\subsection{Proof of Lower Bound}\label{subsec:proof-lb}

We start by showing that in any successful protocol, the entropy of the distribution of $\rZ$ conditioned on the message must be small.
\begin{claim}[The transcript reveals information about $\rZ$.]\label{clm:entropy-low}
	\[
	\en{\rZ \mid \Prot} \leq 24/25.
	\]
\end{claim}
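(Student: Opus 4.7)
The plan is to apply Fano's inequality (\Cref{prop:fanos}) directly, since the claim only says that a constant-success protocol must leave the answer bit with strictly sub-maximal entropy after the transcript is read. First I will observe that by the definition of $\chain_{n,k}$ in \Cref{def:chain}, every message $M_i$ as well as every released index $\ind_i$ is written to the blackboard, so the single output bit of $\player{k+1}$ is a deterministic function of the transcript $\Prot$ alone. Standard averaging over the public-random tape lets us pass to a deterministic protocol $\pi$ with success probability at least $2/3$ under $\cD$, so its output may be written as $g(\Prot) \in \{0,1\}$ for a fixed function $g$.

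Next, since $\pi$ solves $\chain_{n,k}$ with probability at least $2/3$ on inputs drawn from $\cD$, and $\rZ$ is exactly the correct answer by the construction of $\cD$, we have $\Pr[g(\Prot) \neq \rZ] \leq 1/3$. Applying Fano's inequality to the binary random variable $\rZ$ with estimator $\Prot$ and error parameter $\delta = 1/3$ will yield
\[
\en{\rZ \mid \Prot} \;\leq\; H_2(1/3) \;=\; \log_2 3 - \tfrac{2}{3}.
\]
A direct numerical check gives $\log_2 3 - 2/3 < 0.919 < 0.96 = 24/25$, which completes the proof of the claim.

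I do not expect any real obstacle in this step: the claim is essentially a one-line consequence of Fano's inequality together with a trivial numerical inequality. The genuine work of \Cref{thm:main} lies in the complementary direction, which is to show that each message of length $|M_i|$ can only decrease $\en{\rZ \mid \Prot}$ by an amount on the order of $(|M_i| + \log n)/n$ via the biased-index analysis; summing those per-message decrements against the entropy gap $1 - 24/25$ between the initial entropy of $\rZ$ and the bound proved here is what will ultimately deliver the $\Omega(n - k \log n)$ lower bound on the total communication.
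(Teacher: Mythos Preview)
Your proposal is correct and is essentially the same as the paper's proof: a direct application of Fano's inequality (\Cref{prop:fanos}) with error $1/3$, followed by the numerical check $H_2(1/3)=H_2(2/3)\leq 24/25$. The paper already assumes $\pi$ is deterministic (as set up in \Cref{subsec:lb-define}), so your averaging step is harmless but unnecessary.
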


\begin{proof}
	We know that $\player{k+1}$ successfully finds the value of $z$ with probability of success at least 2/3 using the transcript. Thus, using Fano's inequality in \Cref{prop:fanos}, we have,
	\[
	\en{\rZ \mid \Prot} \leq H_2(2/3) \leq 24/25. \qedhere
	\]
\end{proof}

The main part of the proof is that we show a lower bound the entropy of $\rZ$ conditioned on the transcripts using the entropy of the message.

\begin{lemma}\label{lem:dist-ub} For any protocol $\pi$,
	\[
	1-\frac{12}n \cdot (\en{\rM}+ k\log n) \leq  \en{\rZ \mid \Prot}.
	\]
\end{lemma}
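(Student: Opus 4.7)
The plan is to rewrite the inequality as a mutual-information bound. Since $\rZ$ is uniform on $\set{0,1}$ under $\cD$, $\en{\rZ}=1$, so the claim is equivalent to
\[
\mi{\rZ}{\Prot} \leq \frac{12}{n}\bigl(\en{\rM} + k\log n\bigr).
\]
I will prove this by decomposing $\mi{\rZ}{\Prot}$ along the round-by-round structure of the protocol and bounding each round's contribution via the biased-index lemma (\Cref{lem:bias-ind}).

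By the chain rule of mutual information, applied in the temporal order in which the transcript is produced,
\[
\mi{\rZ}{\Prot} = \sum_{i=1}^k \mi{\rZ}{\rM_i, \rvind_i \mid \Prot_{<i}}.
\]
It therefore suffices to show the per-round bound
\[
\mi{\rZ}{\rM_i, \rvind_i \mid \Prot_{<i}} \leq \frac{12}{n}\bigl(\en{\rM_i \mid \Prot_{<i}} + \log n\bigr)
\]
for every $i \in [k]$. Summing in $i$, using the chain rule of entropy ($\sum_i \en{\rM_i \mid \rM_{<i}} = \en{\rM}$) together with the fact that conditioning reduces entropy to get $\en{\rM_i \mid \Prot_{<i}} = \en{\rM_i \mid \rM_{<i}, \rvind_{<i}} \leq \en{\rM_i \mid \rM_{<i}}$ (\Cref{fact:en-mi-facts}), yields the claimed bound.

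The per-round bound is where \Cref{lem:bias-ind} enters. Fix a prefix $\prot_{<i}$ and let $\theta = \Pr[\rZ = 1 \mid \prot_{<i}] - 1/2$. By \Cref{obs:indep-sigma-Z}, $(\rX_i, \rvind_i)$ is independent of $\Prot_{<i}$ given $\rZ$, so conditioning on $\prot_{<i}$ leaves the marginal of $(\rX_i, \rvind_i)$ unchanged apart from re-biasing $\rZ$ by $\theta$. Concretely, the conditional distribution of $(\rX_i, \rvind_i)$ given $\prot_{<i}$ is obtained by first drawing $\rZ$ biased by $\theta$ and then drawing $(\rX_i, \rvind_i)$ uniformly from $\cLhalf \times [n]$ subject to $\rX_i(\rvind_i) = \rZ$; this is exactly the biased-index distribution with parameter $\theta$. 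In this conditional world, $\rM_i$ is a function of $\rX_i$ and of the randomness encoded in $\prot_{<i}$, so it plays the role of Alice's message, $\rvind_i$ is Bob's input, and the answer bit is $\rX_i(\rvind_i) = \rZ$. Applying \Cref{lem:bias-ind} conditionally on $\prot_{<i}$ gives
\[
H_2(1/2+\theta) - \en{\rZ \mid \prot_{<i}, \rM_i, \rvind_i} \leq \frac{12}{n}\bigl(\en{\rM_i \mid \prot_{<i}} + \log n\bigr),
\]
and since $H_2(1/2+\theta) = \en{\rZ \mid \prot_{<i}}$, the left side is precisely $\mi{\rZ}{\rM_i, \rvind_i \mid \prot_{<i}}$. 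Taking expectation over $\prot_{<i} \sim \Prot_{<i}$ yields the per-round bound.

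The main obstacle is the reduction to biased index: I need to carefully verify that the conditional distribution of the round-$i$ inputs given $\prot_{<i}$ really coincides with the biased-index distribution with parameter $\theta(\prot_{<i})$, and that \Cref{lem:bias-ind} applies in this conditional form, treating $\prot_{<i}$ as public randomness shared by the players so that the ``message length'' of the biased-index instance is the conditional entropy $\en{\rM_i \mid \prot_{<i}}$. Once these bookkeeping points are settled, the remaining calculation—assembling the chain rule and collecting the $k$ copies of $\log n$ into $k\log n$—is routine.
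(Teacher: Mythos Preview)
Your proposal is correct and takes essentially the same approach as the paper: both arguments fix a prefix $\prot_{<i}$, observe via \Cref{obs:indep-sigma-Z} that the conditional round-$i$ distribution is exactly $\biasedind(\theta^{(\prot_{<i})})$, and then apply \Cref{lem:bias-ind} to bound the entropy drop (equivalently, the mutual information gain) in round $i$. The paper phrases this as an induction on $\en{\rZ\mid\Prot_{\leq\ell}}$ while you phrase it via the chain rule for mutual information, but these are the same computation; in fact the paper also proves the tighter constant $2/n$ per round, so your $12/n$ is slack but harmless.
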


Before we prove \Cref{lem:dist-ub}, we can easily show that it implies \Cref{thm:main}.

\begin{proof}[Proof of \Cref{thm:main}]
	Combining \Cref{lem:dist-ub} with \Cref{clm:entropy-low}, we get,
	\[
	1-\frac{12}n \cdot ( \en{\rM}+ k\log n) \leq \en{\rZ \mid \Prot} \leq \frac{24}{25}.
	\] 
	This gives that, 
	\[
	\en{\rM} \geq \frac{n}{25 \cdot 12} - k \log n.
	\]
	We know from \Cref{fact:en-mi-facts}-(\ref{part:entropy-bounds}) that $\en{\rM} \leq \log (2^s) = s$, which proves that the total number of bits $s = \Omega(n - k \log n)$ for any deterministic protocol.
	By Yao's minimax principle, we get a lower bound of $\Omega(n-k \log n)$ on the randomized communication complexity of $\chain_{n,k}$. 
\end{proof}

The proof of \Cref{lem:dist-ub} employs a reduction to the two player $\indexprob_{n}$. 
However, in these instances of $\indexprob_n$, Alice and Bob already have some partial information about the answer. We call this problem the \textbf{biased index} problem, and it is defined based on parameter $\theta \in [-1/2, 1/2]$, which is the initial bias known about the answer. 

\begin{definition}\label{def:bias-index}
	The \textnormal{\textbf{biased index distributional communication problem}}, denoted by $\biasedind(\theta)$ for $\theta \in [-1/2, 1/2]$, is defined as follows.
	
	Sample $W \in \{0,1\}$ such that $W= 1$ with probability $1/2 + \theta$, and $W = 0$ otherwise.  Sample $(Y, \rho)$ uniformly at random from $\cLhalf \times [n]$ conditioned on $Y(\rho) = W$. Give string $Y$ to Alice, and index $\rho$ to Bob.  Bob has to output $Y(\rho)$ after a single message $\Mindex$ from Alice. 
\end{definition}

Let $\protindex$ be a deterministic protocol for $\biasedind(\theta)$.
Let $\rW, \rY, \rrho, \rMindex$ denote the random variables corresponding to $W, Y$, $\rho$ and $\Mindex$ respectively. Let $\cD_{\theta}$ denote the joint distribution of $\rW, \rY, \rrho$ and $ \rMindex$ in $\biasedind(\theta)$.

We prove the following lemma about $\biasedind(\theta)$ in \Cref{subsec:bias-ind}.

\begin{lemma}[Biased Index]\label{lem:bias-ind}
	For any protocol $\protindex$ for $\biasedind(\theta)$,
	\[
	\en{\rW \mid \rMindex, \rrho} \geq H_2(1/2+\theta) - \frac{2}{n} \cdot(\en{\rMindex} + \log n).
	\]
\end{lemma}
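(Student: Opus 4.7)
I would attack the lemma by turning it into an upper bound on the mutual information $\II(\rY(\rrho); \rMindex \mid \rrho)_{\cD_\theta} \leq (2/n)(\en{\rMindex} + \log n)$; this is equivalent to the stated bound since $\rW = \rY(\rrho)$ and $\en{\rW \mid \rrho} = H_2(1/2+\theta)$. A direct computation of the joint density shows that $\rrho$ is marginally uniform on $[n]$ under $\cD_\theta$ (only its joint law with $\rY$ is biased), so the mutual information decomposes as $\tfrac{1}{n}\sum_{i=1}^n \II(\rY(i); \rMindex)_{\cD_{\theta,i}}$, where $\cD_{\theta,i}$ denotes the distribution of $(\rY, \rMindex)$ given $\rrho = i$. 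Concretely, $\cD_{\theta,i}$ is the uniform measure $\mu$ on $\cLhalf$ reweighted by $(1+2\theta)$ on strings with $y(i)=1$ and by $(1-2\theta)$ on those with $y(i)=0$, so the task reduces to showing $\sum_i \II(\rY(i); \rMindex)_{\cD_{\theta,i}} \leq 2(\en{\rMindex}+\log n)$.

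The crucial observation for handling the $\rY$–$\rrho$ correlation is this: because the reweighting from $\mu$ to $\cD_{\theta,i}$ depends only on the single bit $\rY(i)$, the conditional distribution of the remaining coordinates $\rY_{-i}$ given $\rY(i)$ is identical under $\mu$ and under $\cD_{\theta,i}$. Since $\rMindex = \protindex(\rY)$ is a deterministic function of $\rY$, this forces the ``channel'' $P(\rMindex \mid \rY(i))$ to be the same under both measures; only the input marginal on $\rY(i)$ shifts, from $\text{Bern}(1/2)$ under $\mu$ to $\text{Bern}(1/2+\theta)$ under $\cD_{\theta,i}$. I would then invoke concavity of mutual information in the input distribution (for a fixed channel) together with the mixture identity $\text{Bern}(1/2) = \tfrac{1}{2}\text{Bern}(1/2+\theta) + \tfrac{1}{2}\text{Bern}(1/2-\theta)$ to conclude $\II(\rY(i); \rMindex)_{\cD_{\theta,i}} \leq 2\, \II(\rY(i); \rMindex)_\mu$. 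This is the step that ``breaks the correlation'' between $\rY$ and $\rrho$ flagged in Section~\ref{sec:overview-techniques}.

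From here everything is routine. Under $\mu$, every $\rY(i)$ is marginally $\text{Bern}(1/2)$, so $\sum_i \en{\rY(i)}_\mu = n$; combined with subadditivity of entropy (\Cref{fact:en-mi-facts}-(\ref{part:entropy-subadditive})) and the identity $\II(\rY;\rMindex)_\mu = \en{\rMindex}_\mu$ (since $\rMindex$ is a function of $\rY$), I obtain $\sum_i \II(\rY(i); \rMindex)_\mu \leq n - \log \binom{n}{n/2} + \en{\rMindex}_\mu$, which is at most $\log n + \en{\rMindex}$ by the binomial lower bound in \Cref{fact:binom-lower-bound}. Since $\rY$'s marginal is the same under $\mu$ and $\cD_\theta$, we have $\en{\rMindex}_\mu = \en{\rMindex}_{\cD_\theta}$, and assembling the inequalities yields exactly the bound $\tfrac{2}{n}(\en{\rMindex} + \log n)$.

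The main point of difficulty is the channel-invariance observation in the second paragraph; without it one is stuck directly analyzing $\II(\rY(i); \rMindex)$ under the biased input law $\cD_{\theta,i}$, which is exactly the correlation issue the overview highlights. Once that observation is in place, concavity of MI and the usual subadditivity argument close the argument cleanly.
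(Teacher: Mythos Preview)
Your argument is correct and takes a genuinely different route from the paper. The paper breaks the $\rY$--$\rrho$ correlation by introducing an auxiliary random set $\rT \subset [n]$ of size $b = n/(1+2\theta)$ and giving an equivalent description of $\cD_\theta$: sample $\rT$ uniformly, form $\rY$ by choosing a random $n/2$-subset of $T$ to be the $1$'s (all indices outside $T$ are $0$), and draw $\rrho$ uniformly from $T$ independently of $\rY$. Conditioning on $\rT$ makes $\rY$ and $\rrho$ independent; subadditivity over the coordinates in $T$ then gives $\en{\rW \mid \rMindex, \rrho} \geq (1/b)\,\en{\rY \mid \rMindex, \rT}$, and a direct count yields $\en{\rY \mid \rT} \geq b\,H_2(1/2+\theta) - O(\log n)$. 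The factor $2/n$ in the lemma comes at the end from the crude bound $b \geq n/2$.

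You never introduce an auxiliary variable: instead you observe that conditioning on $\rrho = i$ reweights the uniform law $\mu$ on $\cLhalf$ only through the single bit $\rY(i)$, so the channel $P(\rMindex \mid \rY(i))$ is identical under $\mu$ and under $\cD_{\theta,i}$; concavity of mutual information in the input distribution (plus nonnegativity, discarding the $\mathrm{Bern}(1/2-\theta)$ branch of the mixture) then gives $\II(\rY(i);\rMindex)_{\cD_{\theta,i}} \leq 2\,\II(\rY(i);\rMindex)_\mu$. Thus your factor $2$ comes from concavity rather than from $b \geq n/2$. Your route is more elementary---no combinatorial verification that an alternate sampling matches $\cD_\theta$---while the paper's auxiliary-set device makes the extension to the Augmented Chain variant in \Cref{subsec:aug-chain} (where one also conditions on $\rY(<\rrho)$) a one-line application of the chain rule over the indices in $T$; your channel-invariance observation would extend there as well with a little more care.
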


Now we can prove \Cref{lem:dist-ub}. 

\begin{proof}
	We prove that, for any $0 \leq \ell \leq k$,
	\[
	\en{\rZ \mid \Prot_{\leq \ell}}  \geq 1- \frac{2}{n}  \cdot (\sum_{j=1}^{\ell} \en{\rM_j \mid \Prot_{< j}}+ \ell \log n),
	\]
	by induction on $\ell$. This implies the statement of the lemma because when $\ell = k$, 
	\begin{align*}
		\en{\rZ \mid \Prot} &= \en{\rZ \mid \Prot_{\leq k}}  \\ &\geq  1- \frac{2}{n}  \cdot (\sum_{j=1}^{k} \en{\rM_j \mid \Prot_{< j}}+k \log n)  \\
		&=   1- \frac{2}{n}  \cdot (\sum_{j=1}^{k} \en{\rM_j \mid \rM_{<j}, \rvind_{<j}}+k\log n)  \tag{by defintion of random variable $\Prot_{<j}$}\\
		&\geq   1- \frac{2}{n}  \cdot (\sum_{j=1}^{k} \en{\rM_j \mid \rM_{<j}}+k\log n) \tag{as conditioning reduces entropy, \Cref{fact:en-mi-facts}-(\ref{part:cond-reduce-entropy})} \\
		&\geq 1-\frac{2}{n} \cdot (\en{\rM} + k \log n). \tag{by chain rule of entropy, \Cref{fact:en-mi-facts}-(\ref{part:chain-rule-entropy})}
	\end{align*}
	
	The base case when $\ell  = 0$ is trivial, as $\en{\rZ} = 1$ when $\rZ$ is uniform over $\{0,1\}$, by \Cref{fact:en-mi-facts}-(\ref{part:entropy-bounds}). Let us assume that the statement is true for some $0 \leq i -1< k$ for the induction hypothesis, and prove it for $i$. 
	
	For any $i \in [k]$ and any $\prot_{< i}$ sampled from $\cD(\Prot_{< i})$, we define
	$\theta^{(\prot_{< i})}$ to be the bias of distribution of $\rZ$ conditioned on $\prot_{< i}$ from uniform, that is,
	\[
	\theta^{(\prot_{< i})} = \Pr(Z = 1 \mid \Prot_{< i} = \prot_{< i})-1/2.
	\]
	We also define distribution $\cD^{(\prot_{< i})}$ as the joint distribution of $\rZ, \rX_{i}, \rvind_{i}$ and $\Prot_{i}$ conditioned on the event $\Prot_{< i} = \prot_{< i}$.

	Now, we employ a reduction to $\biasedind(\theta^{(\prot_{< i})})$ problem. Alice takes on the role of $\player{i}$ and Bob takes on the role of $\player{i+1}$ when the contents of the blackboard are fixed to be $\prot_{<i}$. 
	We will argue that distribution $\cD^{(\prot_{<i})}(\rZ, \rX_i, \rvind_i, \rM_i)$ of random variables in $\chain_{n,k}$ is identical to the distribution $\cD_{\theta^{(\prot_{<i})}}(\rW, \rY, \rrho, \rMindex)$ of the random variables in $\biasedind(\theta^{(\prot_{<i})})$. 
	
	\begin{enumerate}[label=$(\roman*)$]
		\item Random variable $\rZ$ is set to be 1 with probability $1/2 + \theta^{(\prot_{<i})}$ and 0 otherwise, same as random variable $\rW$ in $\cD_{\theta^{(\prot_{<i})}}$. 
		\item Random variable $\rY, \rrho$ is sampled from $\cLhalf \times [n]$ uniformly at random conditioned on $\rY(\rrho) = \rZ$, and this is the same as the distribution of $\rX_{i}, \rvind_i$ in $\cD^{(\prot_{<i})}$.  This is because $\rX_i, \rvind_i$ are independent of $\Prot_{<i} = \prot_{<i}$ conditioned on $\rZ$, by \Cref{obs:indep-sigma-Z}.
		\item As Alice takes on the role of $\player{i}$, given the same input, she sends the same message as $\player{i}$ when the contents of the blackboard are $\prot_{<i}$. The distribution of $\rM_i$ is the same as that of $\rMindex$, as distribution of $\rX_{i}, \rvind_{i}$ is the same as that of $\rY, \rrho$.  
	\end{enumerate}
	
	Hence using \Cref{lem:bias-ind}, we get,
	\begin{equation}\label{eq:from-ind-1}
		\en{\rZ \mid \Prot_{i}, \Prot_{< i} = \prot_{< i}} \geq H_2(1/2 + \theta^{(\prot_{< i})}) - \frac{2}{n} \cdot (\en{\rM_{i} \mid \Prot_{< i} = \prot_{< i}} + \log n).
	\end{equation}
	We also observe that, 
	\begin{equation}\label{eq:bin-ent-entZ}
		H_2(1/2 + \theta^{(\prot_{< i})}) = \en{\rZ \mid \Prot_{< i} = \prot_{< i} },
	\end{equation}
	by definition of binary entropy and \Cref{def:entropy}. 
	
	Lastly, we have,
	\begin{align*}
		\en{\rZ \mid \Prot_{\leq i}} &= \Exp_{\prot_{< i} \sim \cD(\Prot_{< i})} \en{\rZ \mid \Prot_{i}, \Prot_{< i} = \prot_{< i}} \tag{by definition of entropy, \Cref{def:entropy}} \\
		&\geq  \Exp_{\prot_{< i} \sim \cD(\Prot_{< i})}\big[ H_2(1/2 + \theta^{(\prot_{< i})})   -\frac{2}{n} \cdot (\en{\rM_{i} \mid \Prot_i = \prot_{< i}} + \log n)\big] \tag{by \Cref{eq:from-ind-1}}\\
		&= \Exp_{\prot_{< i} \sim \cD(\Prot_{< i})} \big[ \en{\rZ \mid \Prot_{< i} = \prot_{< i}}-\frac{2}{n} \cdot (\en{\rM_{i} \mid \Prot_i = \prot_{< i}} + \log n)\big] \tag{by \Cref{eq:bin-ent-entZ}}\\
		&= \en{\rZ \mid \Prot_{<i}} -\frac2n \cdot \paren{ \Exp_{\prot_{< i} \sim \cD(\Prot_{< i})} (\en{\rM_{i} \mid \Prot_{< i} = \prot_{< i}} + \log n)} \\
		&= \en{\rZ \mid \Prot_{<i }} -\frac2n \cdot(\en{\rM_{i} \mid \Prot_{< i}} + \log n)  \\
		&\geq 1-\frac{2}{n} \cdot \paren{\sum_{j=1}^{i} \en{\rM_j \mid \Prot_{< j}} + i \log n},\tag{by the induction hypothesis}
	\end{align*}
	finishing the proof.
\end{proof}

\subsection{Biased Index}\label{subsec:bias-ind}

In this subsection, we will prove \Cref{lem:bias-ind}. Let us first recall the input distribution to $\biasedind(\theta)$. 

\begin{tbox}
	\textbf{Distribution} $\cD_{\theta}$:
	Sample $W = 1$ with probability $1/2 + \theta$ and set $W  = 0$ otherwise.  
	Sample $(Y, \rho) \in \cLhalf \times [n]$ conditioned on $Y(\rho) = W$. 
\end{tbox}

In $\cD_{\theta}$, the distribution of $\rY$ and $\rrho$ are highly correlated. We give an alternate way of sampling $Y, \rho$ so that this correlation is removed partially. 

\begin{tbox}
	\textbf{Distribution} $\cD_{\theta}'$:
	
	For $\theta \geq 0$:
	\begin{enumerate}[label=$(\roman*)$]
		\item Sample set $T \subset [n]$ of size $b = n/(1+2\theta)$ uniformly at random. 
		\item Sample a set $S$ of  $n/2$ indices  from $T$ uniformly at random and set them to 1, and set $[n]\setminus S$  to 0 to get $Y$. 
		\item Sample $\rho$ by sampling an index uniformly at random from $T$.
	\end{enumerate}
	For $\theta<0$:
	\begin{enumerate}[label=$(\roman*)$]
		\item Sample set $T \subset [n]$ of size $b = n/(1-2\theta)$ uniformly at random. 
		\item Sample a set $S$ of  $n/2$ indices  from $T$ uniformly at random and set them to 0, and set $[n]\setminus S$  to 1 to get $Y$. 
		\item Sample $\rho$ by sampling an index uniformly at random from $T$.
	\end{enumerate}
\end{tbox}
In this section, we assume that $\theta\geq 0$. For the case when $\theta < 0$, the proof follows in the same vein, and is not presented. 
Let $\rT, \rS$ denote the random variables corresponding to set $T$ and set $S$ respectively. 
We will show that distributions $\cD_{\theta}$ and $\cD_{\theta}'$ are in fact identical. 

\begin{claim}\label{clm:prob-sample-D1}
	In distribution $\cD_{\theta}$, for any $(Y, \rho) \in \cLhalf \times [n]$, 
	\begin{equation*}
		\Pr(\rY = Y, \rrho = \rho) =\begin{cases}
			\frac{1+2\theta}{n \cdot \binom{n}{n/2}} &\textnormal{when $Y(\rho) = 1$}, \\
			\frac{1-2\theta}{n \cdot \binom{n}{n/2}} &\textnormal{when $Y(\rho)  = 0$.}
		\end{cases}
	\end{equation*}
\end{claim}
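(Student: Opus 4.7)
The plan is to compute $\Pr(\rY = Y, \rrho = \rho)$ directly from the two-step sampling procedure defining $\cD_\theta$, splitting into cases according to the value of $Y(\rho)$ and conditioning on the Bernoulli variable $\rW$.

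The one nontrivial ingredient I would establish first is a counting fact: the number of pairs $(Y', \rho') \in \cLhalf \times [n]$ with $Y'(\rho') = 1$ is exactly $\binom{n}{n/2} \cdot (n/2)$, since every $Y' \in \cLhalf$ has precisely $n/2$ ones and each choice of a one-position contributes one valid pair. By the symmetry between zeros and ones in $\cLhalf$, the number of pairs with $Y'(\rho') = 0$ is the same.

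For the case $Y(\rho) = 1$: the event $\{\rY = Y, \rrho = \rho\}$ forces $\rW = 1$, which occurs with probability $1/2 + \theta$. Given $\rW = 1$, the pair $(\rY, \rrho)$ is uniform over the $\binom{n}{n/2} \cdot (n/2)$ admissible pairs, so the conditional probability of the specific pair $(Y,\rho)$ is $\frac{2}{n \binom{n}{n/2}}$. Multiplying gives
$$\Pr(\rY = Y, \rrho = \rho) = \Bigl(\tfrac{1}{2} + \theta\Bigr) \cdot \frac{2}{n \binom{n}{n/2}} = \frac{1 + 2\theta}{n \binom{n}{n/2}},$$
as required. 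The case $Y(\rho) = 0$ is handled identically, with $\rW = 0$ replacing $\rW = 1$ and probability $1/2 - \theta$ replacing $1/2 + \theta$, yielding $\frac{1-2\theta}{n \binom{n}{n/2}}$.

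I do not expect any real obstacle—the claim is just an unpacking of the definition of $\cD_\theta$. I flag it only as a checkpoint: the closed-form expression computed here is what I would next match against the marginal of $(\rY,\rrho)$ under $\cD_\theta'$ (by marginalizing over the random set $\rT$ and using $\abs{\rT} = n/(1+2\theta)$) in order to conclude that $\cD_\theta$ and $\cD_\theta'$ coincide, which is the role this claim plays in the subsequent proof of \Cref{lem:bias-ind}.
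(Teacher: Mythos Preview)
Your proposal is correct and follows essentially the same approach as the paper: condition on the value of $\rW$ forced by $Y(\rho)$, use uniformity of $(\rY,\rrho)$ over the admissible pairs given $\rW$, and multiply. The only cosmetic difference is that you spell out the counting $\binom{n}{n/2}\cdot(n/2)$ explicitly, whereas the paper states the resulting conditional probability $2/(n\binom{n}{n/2})$ directly.
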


\begin{proof}
	For any bit $c \in \{0,1\}$, when $Y(\rho) = c$, 
	\begin{align*}
		\Pr(\rY = Y, \rrho = \rho) &= \Pr(\rW =c ) \cdot \Pr(\textnormal{$(Y, \rho)$ is sampled conditioned on $\rW = c$}) \\
		&= \Pr(\rW = c)\cdot \frac2{\binom{n}{n/2} \cdot n} \tag{as $Y, \rho$ are chosen uniformly at random with $Y(\rho) = c$} \\
		&= \frac{2 \cdot \Pr(\rW = c)}{\binom{n}{n/2} \cdot n}. 
	\end{align*}
	Substituting $\Pr(\rW = 1)$ and $\Pr(\rW = 0)$ as $1/2 + \theta$ and $1/2 - \theta $ respectively proves the claim. 
\end{proof}

\begin{claim}\label{clm:cD-two-same}
	Distribution $\cD_{\theta}$ is the same as $\cD_{\theta}'$. 
\end{claim}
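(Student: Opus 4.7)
\textbf{Proof plan for \Cref{clm:cD-two-same}.} Assume $\theta \geq 0$ (the case $\theta < 0$ is symmetric). Since the support of $\cD_\theta$ is contained in $\cLhalf \times [n]$ and both distributions place nonzero mass only on such pairs, it suffices to show that $\cD_\theta$ and $\cD_\theta'$ assign the same probability to every pair $(Y,\rho) \in \cLhalf \times [n]$. The values under $\cD_\theta$ are already given by \Cref{clm:prob-sample-D1}, so the plan is to directly compute the corresponding probabilities under $\cD_\theta'$ by marginalizing over the auxiliary variable $\rT$ (which also determines $\rS$, since $\rS$ equals the support of the $1$'s of $Y$ whenever $Y$ is consistent with $T$).

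For a fixed pair $(Y,\rho)$ with $Y \in \cLhalf$, let $S_Y \subseteq [n]$ denote the set of $n/2$ indices where $Y$ is $1$. Under $\cD_\theta'$, the event $\{\rY = Y, \rrho = \rho\}$ occurs exactly when $\rT \supseteq S_Y \cup \{\rho\}$, $\rS = S_Y$, and $\rrho = \rho$. The probability factors as
\[
\Pr(\rY=Y,\rrho=\rho) \;=\; \frac{1}{\binom{n}{b}\binom{b}{n/2}\, b}\cdot \#\{T : |T|=b,\ T \supseteq S_Y \cup \{\rho\}\}.
\]
This reduces to two elementary counting cases. If $Y(\rho)=1$ then $\rho \in S_Y$, so the count is $\binom{n/2}{b-n/2}$; if $Y(\rho)=0$ then $\rho \notin S_Y$, so the count is $\binom{n/2-1}{b-n/2-1}$.

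I would then simplify these two expressions by substituting $b = n/(1+2\theta)$ and using two combinatorial identities. The first is
\[
\binom{n}{n/2}\binom{n/2}{b-n/2} \;=\; \binom{n}{b}\binom{b}{n/2},
\]
obtained by counting pairs $(S,T)$ with $S\subseteq T \subseteq [n]$, $|S|=n/2$, $|T|=b$ in two orders; this immediately gives the $Y(\rho)=1$ case, matching $\tfrac{1+2\theta}{n\binom{n}{n/2}}$ via $1+2\theta = n/b$. The second is the ratio identity
\[
\frac{\binom{n/2-1}{b-n/2-1}}{\binom{n/2}{b-n/2}} \;=\; \frac{2b-n}{n},
\]
which handles the $Y(\rho)=0$ case, matching $\tfrac{1-2\theta}{n\binom{n}{n/2}}$ via $1-2\theta = (2b-n)/b$.

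The only mildly delicate point is bookkeeping: one must ensure that the sampling of $\rT$ in $\cD_\theta'$ is always feasible, i.e.\ that $n/2 \leq b \leq n$, which holds precisely when $\theta \in [0,1/2]$, and one must verify that both probabilities given in \Cref{clm:prob-sample-D1} are recovered without a case split beyond the two above. Everything else is routine algebra with binomial coefficients, so I do not expect any genuine obstacle here; the combinatorial double count $\binom{n}{n/2}\binom{n/2}{b-n/2}=\binom{n}{b}\binom{b}{n/2}$ is the only step that actually uses structure, and it is standard.
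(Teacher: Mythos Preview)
Your proposal is correct and follows essentially the same approach as the paper: compute $\Pr(\rY=Y,\rrho=\rho)$ under $\cD_\theta'$ in the two cases $Y(\rho)=1$ and $Y(\rho)=0$ by summing over admissible $T$, then match against \Cref{clm:prob-sample-D1}. The only cosmetic difference is that for the $Y(\rho)=1$ case the paper short-circuits the count by observing that $\rS$ is marginally uniform over $\binom{[n]}{n/2}$ (which is exactly your identity $\binom{n}{n/2}\binom{n/2}{b-n/2}=\binom{n}{b}\binom{b}{n/2}$ in disguise), and for $Y(\rho)=0$ it expands factorials directly rather than invoking your ratio identity.
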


\begin{proof}
	For a fixed $Y, \rho$ we will show that the probability of sampling $(Y, \rho)$ is same in both $\cD_{\theta}$ and $\cD_{\theta}'$. 
	Let $S'$ be the set of indices where the string $Y$ takes value 1. 
	
	We begin with the case when $Y(\rho) = 1$ and $\rho \in S'$.
	In distribution $\cD_{\theta}'$, for $(Y, \rho)$ to be sampled, the random variable $\rS $ must be fixed to $S'$, and index $\rho$ must be chosen from $\rT$.
	
	Let us find the probability with which $\rS$ is chosen to be $S'$.  The distribution of  $\rS$ is uniform over all subsets of $[n]$ of size $n/2$, as $\rT$ is uniform, and $\rS$ is chosen uniformly from $\rT$. Therefore, 
	\begin{align*}
		\Pr(\rS = S') &= \frac1{\binom{n}{n/2}}. 
	\end{align*}
	Now, we can find the probability that $(Y, \rho)$ is sampled. 
	\begin{align*}
		\Pr((Y, \rho) \textnormal{ is sampled from $\cD_{\theta}'$}) 	 &= \Pr( \rS = S') \cdot \Pr(\textnormal{$\rho \in S'$ is chosen from $\rT$} \mid \rS = S')  \\
		&= \frac1{\binom{n}{n/2}} \cdot  \Pr(\textnormal{$\rho$ is chosen from $\rT$} \mid  \rho \in \rT) \tag{as $Y(\rho) = 1$ and $\rho \in S' = \rS \subset \rT$} \\
		&= \frac1{\binom{n}{n/2}} \cdot \frac1{b} \tag{choosing one index among $b$ choices}\\
		&=  \frac{(1+2\theta)}{\binom{n}{n/2} \cdot n}.
	\end{align*}
	
	Now we look at the case when $Y(\rho) = 0$ and $\rho \notin S'$. Here, for $(Y, \rho)$ to be sampled, we first need that $\rS = S'$ and one index from the $b - n/2 $ remaining indices of $\rT$ must be $\rho$. 
	\begin{align*}
		\Pr(S' = \rS \textnormal{ and } \rho \in \rT) &= \Pr(S' \cup \set{\rho} \subset \rT) \cdot \Pr(S' = \rS \mid S' \subset \rT) \\
		&=  \binom{n/2-1}{b-n/2-1} \cdot \frac1{\binom{n}{b}} \cdot \Pr(S' = \rS \mid S' \subset \rT)  \tag{as we choose remaining elements of $\rT$ from $[n] \setminus (S' \cup \set{\rho})$} \\
		&= \binom{n/2-1}{b-n/2-1} \cdot \frac1{\binom{n}{b}}  \cdot \frac1{\binom{b}{n/2}} \tag{choosing one choice among all for $\rS$} \\
		&=  \frac{(b-n/2)}{n/2} \cdot \frac{(n/2)!}{(b-n/2)! (n-b)!} \cdot \frac{b!  (n-b)!}{n!} \cdot \frac{(n/2)! (b-n/2)!}{b!} \\
		&= (2b-n) \cdot \frac1{n \cdot \binom{n}{n/2}}
	\end{align*}
	The final probability of sampling $(Y, \rho)$ is,
	\begin{align*}
		\Pr((Y, \rho) \textnormal{ is sampled from $\cD_{\theta}'$}) 	 &= 	\Pr(S' = \rS \textnormal{ and } \rho \in \rT)  \cdot  \Pr(\textnormal{$\rho$ is chosen from $\rT$} \mid  \rho \in \rT) \\
		&=  \frac{2b-n}{b} \cdot \frac1{n \cdot \binom{n}{n/2}} \\
		&= \frac{1-2\theta}{n \cdot \binom{n}{n/2}}.
	\end{align*}
	Using \Cref{clm:prob-sample-D1} we complete the proof. 
\end{proof}

Using this alternate way of sampling, it is easy to see that random variables $\rY$ and $\rrho$ are independent of each other conditioned on $\rT$. It can also be extended to include random variable $\rMindex$, as it is only a function of $\rY$. 
\begin{observation}\label{obs:ind-cond-M-T}
	In distribution $\cD_{\theta}'$, conditioned on $\rT = T$,  for any $i \in T$, distribution of random variables $\rY, \rMindex$ is independent of event $\rrho = i$. 
\end{observation}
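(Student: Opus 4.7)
The plan is to exploit the cleanly factorized sampling procedure $\cD_{\theta}'$, which by \Cref{clm:cD-two-same} coincides with $\cD_{\theta}$. In that description, once $\rT$ is fixed, the set $\rS$ (which determines $\rY$ via $Y(j) = 1$ iff $j \in \rS$, in the $\theta \geq 0$ case) and the index $\rrho$ are drawn by two independent sub-experiments: $\rS$ is uniform over size-$n/2$ subsets of $T$, while $\rrho$ is uniform over $T$ in a separate, independent step. I would formalize this as the statement $\rY \perp \rrho$ conditioned on $\rT$, which is immediate from the factorized form of $\cD_{\theta}'$.

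Next I would lift the conclusion from $\rY$ to the pair $(\rY, \rMindex)$ by noting that $\rMindex$ is a deterministic function of $\rY$ under the fixed protocol $\protindex$ (and if one wishes to accommodate private randomness in $\protindex$, those coins are sampled independently of $(\rT, \rS, \rrho)$, so the argument is unchanged). Hence conditioned on $\rT = T$, the pair $(\rY, \rMindex)$ is a function of $\rS$ alone and is therefore independent of $\rrho$ given $\rT$. Specializing to an arbitrary $i \in T$ — which has positive conditional probability $1/b$ given $\rT = T$, so the event $\rrho = i$ is a legitimate conditioning set — yields the desired equality
\[
\cD_{\theta}'\bigl(\rY, \rMindex \mid \rT = T, \rrho = i\bigr) \;=\; \cD_{\theta}'\bigl(\rY, \rMindex \mid \rT = T\bigr),
\]
which is exactly the statement of the observation.

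There is essentially no obstacle here: the whole point of introducing the intermediate set $\rT$ in the definition of $\cD_{\theta}'$ was to decouple Alice's string from Bob's index, making precisely this conditional independence transparent. The only minor bookkeeping is the $\theta < 0$ case, where $\rS$ plays the role of the zero-coordinates of $\rY$ instead of the one-coordinates; the independence of $\rS$ and $\rrho$ given $\rT$ is preserved by construction, so the argument transfers verbatim.
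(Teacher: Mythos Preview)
Your proposal is correct and follows essentially the same approach as the paper: both argue that, conditioned on $\rT = T$, the set $\rS$ (hence $\rY$ and the deterministic message $\rMindex$) is sampled independently of the uniform choice of $\rrho \in T$ by the very definition of $\cD_{\theta}'$. Your version is slightly more explicit about the $\theta < 0$ case and the positivity of $\Pr[\rrho = i \mid \rT = T]$, but the argument is otherwise identical.
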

\begin{proof}
	Conditioned on $\rT = T$, string $Y$ is chosen by picking an $n/2$ size set $S$ uniformly at random from $T$, and setting these indices to 1, and this choice also fixes $\rMindex$ as the protocol is deterministic. And index $\rho$ is chosen uniformly at random from $T$, independently of the choice of $S$, by definition of $\cD_{\theta}'$. Thus, choice of $\rrho = i$ is independent of random variables $\rY, \rMindex$. 
\end{proof}

Next, we will show that even conditioned on $\rT$, the entropy of $\rY$ remains large.

\begin{claim}\label{clm:entropy-large}
	\[
	\en{\rY \mid \rT} \geq \frac{n}{(1+2\theta)} \cdot H_2(1/2+\theta) - 2\log n.
	\]
\end{claim}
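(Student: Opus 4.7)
The plan is to reduce the claim to a direct computation of the entropy of a uniform distribution. Conditioned on $\rT = T$, the sampling procedure for $\cD_\theta'$ picks $\rS$ uniformly among all $n/2$-subsets of $T$ and sets $\rY$ to be the indicator vector of $\rS$. Thus $\rY \mid \rT = T$ is uniform over exactly $\binom{b}{n/2}$ strings, so
\[
\en{\rY \mid \rT = T} = \log \binom{b}{n/2},
\]
and this value depends only on $b = |T|$, which is deterministic. Taking expectation over $\rT$ gives $\en{\rY \mid \rT} = \log \binom{b}{n/2}$.

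Next, I would invoke \Cref{fact:binom-lower-bound} with $p = b$ and $q = n/2$. The crucial algebraic identity is
\[
\frac{q}{p} = \frac{n/2}{n/(1+2\theta)} = \frac{1+2\theta}{2} = \frac{1}{2}+\theta,
\]
so $H_2(q/p) = H_2(1/2+\theta)$ and $p \cdot H_2(q/p) = \frac{n}{1+2\theta}\, H_2(1/2+\theta)$, which is exactly the main term of the claimed bound. The binomial bound then gives
\[
\log \binom{b}{n/2} \;\geq\; \frac{n}{1+2\theta}\, H_2(1/2+\theta) \;+\; \frac{1}{2}\log\!\left(\frac{b}{8\pi\,(n/2)(b-n/2)}\right).
\]

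The remaining task is to show the square-root correction is at least $-2\log n$. A direct computation with $b = n/(1+2\theta)$ and $b - n/2 = n(1-2\theta)/(2(1+2\theta))$ yields
\[
\frac{b}{8\pi\,(n/2)(b-n/2)} \;=\; \frac{1}{2\pi n(1-2\theta)},
\]
and since $1-2\theta \leq 1$, the correction is $-\tfrac{1}{2}\log(2\pi n(1-2\theta)) \geq -\tfrac{1}{2}\log(2\pi n) \geq -2\log n$ for all $n \geq 2$. (The boundary case $\theta = 1/2$ degenerates to $b = n/2$ with $\binom{b}{n/2}=1$ and $H_2(1) = 0$, where the inequality holds trivially.) No serious obstacle arises here — the only care needed is verifying the algebraic simplification of $b/(8\pi(n/2)(b-n/2))$ and checking that the lower-order term does not absorb into the main $H_2$ term in a harmful way. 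Conceptually, the content of the claim is that $\rT$ reveals essentially no information about $\rY$ beyond a mild $O(\log n)$ correction, and this follows immediately from the uniformity of $\rY$ given $\rT$.
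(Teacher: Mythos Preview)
Your proof is correct and follows essentially the same route as the paper: observe that $\rY \mid \rT = T$ is uniform over $\binom{b}{n/2}$ strings, apply the binomial lower bound from \Cref{fact:binom-lower-bound}, and check that the square-root correction is at least $-2\log n$. The only cosmetic difference is that you retain the factor $\pi$ from the fact while the paper drops it; your closing conceptual remark that ``$\rT$ reveals essentially no information about $\rY$'' is a bit off (for large $\theta$, conditioning on $\rT$ can drop the entropy of $\rY$ substantially below $\en{\rY} \approx n$), but this does not affect the argument.
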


\begin{proof}
	We assume that $\theta < 1/2$, as otherwise, the statement is vacuously true. $H_2(1) = 0$ by definition, and entropy is always non-negative by \Cref{fact:en-mi-facts}-(\ref{part:entropy-bounds}).
	
	Conditioned on $\rT = T$, we know that $\rY$ is fixed by choosing set $\rS$ uniformly at random. Thus,
	\begin{align*}
		\en{\rY \mid \rT} &= \log \paren{\binom{b}{n/2}} \tag{by \Cref{fact:en-mi-facts}-(\ref{part:entropy-bounds})} \\
		&\geq \log \paren{2^{b H_2(n/2b)} \cdot \sqrt{\frac{b}{8(n/2)(b-n/2)}}} \tag{by \Cref{fact:binom-lower-bound}, and $n/2 < b$, as $\theta < 1/2$} \\
		&= b \cdot H_2(n/2b) + \frac12 \cdot \log (\frac{b}{8(n/2)(b-n/2)}) \\
		&= \frac{n}{(1+2\theta)} \cdot H_2(1/2 + \theta) + \frac12 \log (\frac1{4n \cdot (1/2-\theta)}) \\
		& \geq  \frac{n}{(1+2\theta)} \cdot H_2(1/2 + \theta) + \frac12 \log (1/4n) \tag{as $1/2-\theta \leq 1$} \\
		&\geq \frac{n}{(1+2\theta)} \cdot H_2(1/2 + \theta) - 2 \log n. \qedhere
	\end{align*}
\end{proof}

We are ready to prove \Cref{lem:bias-ind}.
\begin{proof}[Proof of \Cref{lem:bias-ind}]
	We can lower bound the entropy of $\rW$ conditioned on $\rMindex, \rrho$ as, 
	\begin{align*}
		\en{\rW \mid \rMindex, \rrho} &\geq \en{\rW \mid \rMindex, \rrho, \rT} \tag{as conditioning reduces entropy, \Cref{fact:en-mi-facts}-(\ref{part:cond-reduce-entropy})} \\
		&= \Exp_{\rT = T} \big[\frac1{b} \cdot \sum_{\rho \in T} \en{\rW \mid \rMindex, \rrho = \rho, \rT = T}\big] \tag{as $\rrho$ is uniform over $T$} \\
		&= \Exp_{\rT = T}  \big[\frac1{b} \cdot \sum_{\rho \in T} \en{\rY(\rho) \mid \rMindex, \rrho = \rho, \rT = T}\big] \tag{as $W = Y(\rho)$ by definition of $\cD_{\theta}$ }\\
		&=  \Exp_{\rT = T}  \big[\frac1{b} \cdot \sum_{\rho \in T} \en{\rY(\rho) \mid \rMindex, \rT = T}\big] \tag{as $\rY, \rMindex \perp (\rrho = \rho) \mid \rT = T$, by \Cref{obs:ind-cond-M-T}} \\
		&\geq \Exp_{\rT = T}  \big[\frac1{b} \cdot  \en{\rY(T) \mid \rMindex, \rT = T}\big] \tag{by subadditivity of entropy, \Cref{fact:en-mi-facts}-(\ref{part:entropy-subadditive})} \\
		&=  \Exp_{\rT = T}  \big[\frac1{b} \cdot \en{\rY \mid \rMindex, \rT = T}\big] \tag{as $Y([n] \setminus T)$ is fixed to be 0} \\
		&= \frac1b \cdot \en{\rY \mid \rMindex, \rT}. 
	\end{align*}
	Thus it follows that,
	\begin{equation} \label{eq:inter-new-1}
		\en{\rY \mid \rMindex, \rT}  \leq b \cdot \en{\rW \mid \rMindex, \rrho} .
	\end{equation}
	We also have,
	\begin{align*}
		\en{\rY \mid \rT} &= \en{\rY, \rMindex \mid \rT} \tag{as $M$ is fixed by $Y$} \\
		&= \en{\rMindex \mid \rT} + \en{\rY \mid \rMindex, \rT} \tag{by chain rule of entropy, \Cref{fact:en-mi-facts}-(\ref{part:chain-rule-entropy})} \\
		&\leq \en{\rMindex \mid \rT} + b \cdot \en{\rW \mid \rMindex, \rrho} \tag{by \Cref{eq:inter-new-1}} \\
		&\leq \en{\rMindex} +  b \cdot \en{\rW \mid \rMindex, \rrho}. \tag{as conditioning reduces entropy, by \Cref{fact:en-mi-facts}-(\ref{part:cond-reduce-entropy})} 
	\end{align*}
	Combining with \Cref{clm:entropy-large}, we get,
	\begin{align*}
		\en{\rMindex} +  b \cdot \en{\rW \mid \rMindex, \rrho}   &\geq   b \cdot H_2(1/2 + \theta) - 2\log n \tag{as $b = n/(1+2\theta)$}\\
		b \cdot	\en{\rW \mid \rMindex, \rrho} &\geq b \cdot H_2(1/2 + \theta) - (\en{\rMindex}  + 2 \log n).  \tag{rearranging the terms}
		\end{align*}
	Dividing both sides by $b$, we get,
	\begin{align*}
		\en{\rW \mid \rMindex, \rrho}	&\geq H_2(1/2 + \theta) - \frac1b \cdot (\en{\rMindex} + 2 \log n) \\
		 &\geq  H_2(1/2 + \theta) - \frac2n \cdot (\en{\rMindex} + 2 \log n), \tag{as $b \geq n/2$}
	\end{align*}
	finishing the proof.
\end{proof}

\subsection{Extension to Augmented Chain}\label{subsec:aug-chain}

In this subsection, we will extend our lower bound to the Augmented Chain problem introduced by \cite{DarkDK23}. We begin by defining Augmented Index. 

Augmented Index is a close variant of the $\indexprob$ problem. Here, in addition to having the index $\ind$, Bob also has the bits $X(< \ind)$. We know that this generalization also requires $\Omega(n)$ communication when Alice sends a message to Bob \cite{MiltersenNSW98}. This problem is particularly useful for proving lower bounds for turnstile streams (see e.g., \cite{ClarksonW09,KaneNW10,DarkK20}, and references therein). Tight information cost trade-off for this variant in the two-way communication model was proved by \cite{ChakrabartiCKM13}.

The formal definition of Augmented Chain follows.

\begin{definition}[Augmented Chain]\label{def:aug-chain}
	The $\augchain_{n,k}$ communication problem is defined as follows. 
	Given $k+1$ players $\player{i}$ for $i \in [k+1]$  where,
	\begin{itemize}
		\item $\player{i}$ has a string $X_i \in \set{0,1}^n$ for each $i \in [k]$,
		\item $\player{i}$ for $1 < i \leq k+1$ has an index $\ind_{i-1} \in [n]$, and a string $X_{i-1}(<\ind_{i-1})$, 
	\end{itemize}
	such that,
	\[
	X_i(\ind_{i}) = z,
	\]
	for some bit $z \in \set{0,1}$. 
	The players have a blackboard visible to all the parties. For $i \in [k]$ in ascending order, $\player{i}$ sends a single message $M_i$, after which the index $\ind_i$, and the string $X_{i-1}(<\ind_{i-1})$ are revealed to the blackboard at no cost. $\player{k+1}$ has to output whether $z$ is 0 or 1. 
	Refer to \Cref{fig:board-aug-chain} for an illustration. 
\end{definition}

\begin{figure}[h!]
	\centering
	\tikzset{every picture/.style={line width=0.75pt}} 

\begin{tikzpicture}[x=0.75pt,y=0.75pt,yscale=-1,xscale=1]
	
	\draw   (194,83.27) -- (275.55,83.27) -- (275.55,167) -- (194,167) -- cycle ;
	\draw  [fill={rgb, 255:red, 155; green, 155; blue, 155 }  ,fill opacity=0.23 ] (111.55,217) -- (504.55,217) -- (504.55,306.27) -- (111.55,306.27) -- cycle ;
	\draw    (238.55,174.27) -- (248.08,213.33) ;
	\draw [shift={(248.55,215.27)}, rotate = 256.29] [color={rgb, 255:red, 0; green, 0; blue, 0 }  ][line width=0.75]    (10.93,-3.29) .. controls (6.95,-1.4) and (3.31,-0.3) .. (0,0) .. controls (3.31,0.3) and (6.95,1.4) .. (10.93,3.29)   ;
	\draw  [dash pattern={on 4.5pt off 4.5pt}]  (214,216) -- (230.8,174.13) ;
	\draw [shift={(231.55,172.27)}, rotate = 111.87] [color={rgb, 255:red, 0; green, 0; blue, 0 }  ][line width=0.75]    (10.93,-3.29) .. controls (6.95,-1.4) and (3.31,-0.3) .. (0,0) .. controls (3.31,0.3) and (6.95,1.4) .. (10.93,3.29)   ;
	\draw    (344,173) -- (355.02,213.34) ;
	\draw [shift={(355.55,215.27)}, rotate = 254.72] [color={rgb, 255:red, 0; green, 0; blue, 0 }  ][line width=0.75]    (10.93,-3.29) .. controls (6.95,-1.4) and (3.31,-0.3) .. (0,0) .. controls (3.31,0.3) and (6.95,1.4) .. (10.93,3.29)   ;
	\draw  [dash pattern={on 4.5pt off 4.5pt}]  (317,215) -- (337.61,176.04) ;
	\draw [shift={(338.55,174.27)}, rotate = 117.88] [color={rgb, 255:red, 0; green, 0; blue, 0 }  ][line width=0.75]    (10.93,-3.29) .. controls (6.95,-1.4) and (3.31,-0.3) .. (0,0) .. controls (3.31,0.3) and (6.95,1.4) .. (10.93,3.29)   ;
	\draw  [dash pattern={on 4.5pt off 4.5pt}]  (460,215) -- (484.97,173.71) ;
	\draw [shift={(486,172)}, rotate = 121.16] [color={rgb, 255:red, 0; green, 0; blue, 0 }  ][line width=0.75]    (10.93,-3.29) .. controls (6.95,-1.4) and (3.31,-0.3) .. (0,0) .. controls (3.31,0.3) and (6.95,1.4) .. (10.93,3.29)   ;
	\draw    (125,174) -- (140.34,207.91) -- (143.18,214.18) ;
	\draw [shift={(144,216)}, rotate = 245.66] [color={rgb, 255:red, 0; green, 0; blue, 0 }  ][line width=0.75]    (10.93,-3.29) .. controls (6.95,-1.4) and (3.31,-0.3) .. (0,0) .. controls (3.31,0.3) and (6.95,1.4) .. (10.93,3.29)   ;
	\draw    (503,149) -- (528.55,149.25) ;
	\draw [shift={(530.55,149.27)}, rotate = 180.57] [color={rgb, 255:red, 0; green, 0; blue, 0 }  ][line width=0.75]    (10.93,-3.29) .. controls (6.95,-1.4) and (3.31,-0.3) .. (0,0) .. controls (3.31,0.3) and (6.95,1.4) .. (10.93,3.29)   ;
	\draw   (92,84.27) -- (173.55,84.27) -- (173.55,168) -- (92,168) -- cycle ;
	\draw   (293,82.27) -- (374.55,82.27) -- (374.55,166) -- (293,166) -- cycle ;
	\draw   (421,83.27) -- (502.55,83.27) -- (502.55,167) -- (421,167) -- cycle ;
	
	\draw (125,118) node [anchor=north west][inner sep=0.75pt]   [align=left] {$\displaystyle X_{1}$};
	\draw (199,92) node [anchor=north west][inner sep=0.75pt]   [align=left] {$\displaystyle  \begin{array}{{>{\displaystyle}l}}
			X_{2}\\
			\sigma _{1}\\
			X_{1}( < \sigma _{1})
		\end{array}$};
	\draw (129,58) node [anchor=north west][inner sep=0.75pt]   [align=left] {$\displaystyle P_{1}$};
	\draw (219,56) node [anchor=north west][inner sep=0.75pt]   [align=left] {$\displaystyle P_{2}$};
	\draw (273,309) node [anchor=north west][inner sep=0.75pt]   [align=left] {Board};
	\draw (128,229) node [anchor=north west][inner sep=0.75pt]   [align=left] {$\displaystyle M_{1}$};
	\draw (159,251) node [anchor=north west][inner sep=0.75pt]   [align=left] {$\displaystyle \sigma _{1}$};
	\draw (237,229) node [anchor=north west][inner sep=0.75pt]   [align=left] {$\displaystyle M_{2}$};
	\draw (253,255) node [anchor=north west][inner sep=0.75pt]   [align=left] {$\displaystyle \sigma _{2}$};
	\draw (350,228) node [anchor=north west][inner sep=0.75pt]   [align=left] {$\displaystyle M_{3}$};
	\draw (397,228) node [anchor=north west][inner sep=0.75pt]   [align=left] {$\displaystyle M_{k}$};
	\draw (360,256) node [anchor=north west][inner sep=0.75pt]   [align=left] {$\displaystyle .....$};
	\draw (419,252) node [anchor=north west][inner sep=0.75pt]   [align=left] {$\displaystyle \sigma _{k}$};
	\draw (533,125) node [anchor=north west][inner sep=0.75pt]   [align=left] {Output $\displaystyle z$};
	\draw (300,92) node [anchor=north west][inner sep=0.75pt]   [align=left] {$\displaystyle  \begin{array}{{>{\displaystyle}l}}
			X_{3}\\
			\sigma _{2}\\
			X_{2}( < \sigma _{2})
		\end{array}$};
	\draw (336,59) node [anchor=north west][inner sep=0.75pt]   [align=left] {$\displaystyle P_{3}$};
	\draw (429,113) node [anchor=north west][inner sep=0.75pt]   [align=left] {$\displaystyle  \begin{array}{{>{\displaystyle}l}}
			\sigma _{k}\\
			X_{k}( < \sigma _{k})
		\end{array}$};
	\draw (446,56) node [anchor=north west][inner sep=0.75pt]   [align=left] {$\displaystyle P_{k}$};
	\draw (386,125) node [anchor=north west][inner sep=0.75pt]   [align=left] {$\displaystyle ......$};
	\draw (157,275) node [anchor=north west][inner sep=0.75pt]   [align=left] {$\displaystyle X_{1}( < \sigma _{1})$};
	\draw (252,276) node [anchor=north west][inner sep=0.75pt]   [align=left] {$\displaystyle X_{2}( < \sigma _{2})$};
	\draw (419,274) node [anchor=north west][inner sep=0.75pt]   [align=left] {$ $};
	\draw (420,278) node [anchor=north west][inner sep=0.75pt]   [align=left] {$\displaystyle X_{k}( < \sigma _{k})$};

\end{tikzpicture}\caption{An illustration of the $\augchain_{n,k}$ problem from \Cref{def:aug-chain}. The solid arrows illustrate that player $\player{i}$ writes a message $M_i$ to the board. The dashed arrows indicate that $\player{i}$ can read the contents of the board. The order in which the messages are sent and indices, strings are released is also shown. } \label{fig:board-aug-chain}
\end{figure}

For the chained version of Augmented Index, the lower bound that at least one player sends $\Omega(n/k^2)$ bits holds true, and the proof follows with minimal changes.  Using this, \cite{DarkDK23} proved lower bounds for interval independent set selection in turnstile streams with weighted intervals. 

We prove the following result about $\augchain_{n,k}$. 

\begin{theorem}\label{thm:aug-chain-lb}
	For any $n, k \geq 1$, any protocol for $\augchain_{n,k}$ with probability of success at least 2/3 requires communication $\Omega(n-k\log n)$. 
\end{theorem}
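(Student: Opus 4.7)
The plan is to mirror the proof of \Cref{thm:main}, replacing the reduction to biased index with one to an \emph{augmented} biased index problem in which Bob additionally receives the prefix $\rY(<\rrho)$, matching the augmented structure. The input distribution to this problem is identical to $\cD_\theta$ from \Cref{def:bias-index}. The key lemma I aim to prove is the direct analog of \Cref{lem:bias-ind}:
\begin{equation*}
\en{\rW \mid \rMindex, \rrho, \rY(<\rrho)} \geq H_2(1/2+\theta) - \frac{c}{n}\paren{\en{\rMindex} + \log n},
\end{equation*}
for some absolute constant $c$. Once this is established, the inductive argument of \Cref{lem:dist-ub} carries over with minimal changes.

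The heart of the augmented biased index lemma is a clean chain-rule identity exploiting the sampling $\cD_\theta'$ (for $\theta \geq 0$; the case $\theta < 0$ is symmetric). Condition on the random set $\rT$ of size $b = n/(1+2\theta)$, write $\rT = \{t_1 < t_2 < \cdots < t_b\}$, and let $\rv{J} \in [b]$ denote the rank of $\rrho$ in $\rT$. By the analog of \Cref{obs:ind-cond-M-T}, conditioned on $\rT$ the variable $\rv{J}$ is uniform on $[b]$ and independent of $(\rY, \rMindex)$. Moreover, given $\rT$, the prefix $\rY(<\rrho)$ is equivalent to $(\rY(t_1), \ldots, \rY(t_{\rv{J}-1}))$ since $\rY$ is zero outside $\rT$. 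Averaging over $\rv{J}$ and then applying the chain rule of entropy (\Cref{fact:en-mi-facts}-(\ref{part:chain-rule-entropy})) gives
\begin{equation*}
\en{\rW \mid \rMindex, \rrho, \rY(<\rrho), \rT} \;=\; \frac{1}{b}\sum_{j=1}^{b}\en{\rY(t_j) \mid \rMindex, \rT, \rY(t_1), \ldots, \rY(t_{j-1})} \;=\; \frac{1}{b}\en{\rY \mid \rMindex, \rT}.
\end{equation*}
Combining with $\en{\rY \mid \rMindex, \rT} \geq \en{\rY \mid \rT} - \en{\rMindex}$ and \Cref{clm:entropy-large}, and using $b \geq n/2$, yields the claimed inequality.

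With the augmented biased index lemma in hand, I use the same hard distribution $\cD$ from \Cref{subsec:lb-define} for $\augchain_{n,k}$. The transcript at step $i$ now also contains the previously revealed prefixes $\rX_j(<\rvind_j)$ for $j < i$, but the conditional independence of \Cref{obs:indep-sigma-Z} extends automatically: given $\rZ$, the pair $(\rX_i, \rvind_i)$ is independent of $\Prot_{<i}$, and $\rX_i(<\rvind_i)$ is a function of $(\rX_i, \rvind_i)$. At step $i$, the reduction assigns Alice the string $\rX_i$ and Bob the pair $(\rvind_i, \rX_i(<\rvind_i))$ with bias $\theta^{(\prot_{<i})} = \Pr(\rZ = 1 \mid \Prot_{<i} = \prot_{<i}) - 1/2$, matching the augmented biased index distribution exactly. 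The main subtlety I anticipate is the bookkeeping of the extra prefixes in the transcript; however, they do not inflate the final bound, since the augmented biased index lemma already accounts for the prefix $\rY(<\rrho)$ on Bob's side, and the only additional $k\log n$ term still comes from the revealed indices. Running the induction of \Cref{lem:dist-ub} with this new reduction and combining with Fano's inequality (\Cref{prop:fanos}) then yields the $\Omega(n - k\log n)$ lower bound for $\augchain_{n,k}$.
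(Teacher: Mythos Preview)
Your proposal is correct and follows essentially the same route as the paper's proof sketch: both define an augmented biased index problem, prove the analog of \Cref{lem:bias-ind} by conditioning on $\rT$, ordering its elements, collapsing the prefix $\rY(<\rrho)$ to $(\rY(t_1),\dots,\rY(t_{\rv{J}-1}))$ (since $\rY$ vanishes outside $\rT$), and applying the chain rule to recover $\tfrac{1}{b}\en{\rY\mid \rMindex,\rT}$, then rerun the induction of \Cref{lem:dist-ub} with the enlarged transcript $\Prot_{<i}$ that now also contains the revealed prefixes. The only cosmetic difference is that you phrase the averaging via the rank variable $\rv{J}$, whereas the paper writes the sum over $\rho\in T$ directly; the arguments are otherwise identical.
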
 

We only give a proof sketch detailing the changes needed to prove \Cref{thm:aug-chain-lb}, as most parts of the proof are similar to the proof of  \Cref{thm:main}.

\begin{proof}[Proof Sketch of \Cref{thm:aug-chain-lb}]
	
	Let $\pi$ be any deterministic protocol for Augmented Index which sends at most $s$ bits. 
	
	We begin by proving the analog of \Cref{lem:bias-ind} for biased augmented index, where in addition to having $\rrho$, Bob also has $\rY(< \rrho)$ and has to output $\rY(\rrho)$ after a message from Alice. 
	
	\begin{claim}\label{clm:bias-aug-ind}
		\[
		\en{\rW \mid \rMindex, \rrho, \rY(< \rrho)} \geq H_2 (1/2 + \theta) - \frac2n \cdot (\en{\rMindex}  + \log n).
		\]
	\end{claim}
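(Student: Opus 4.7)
The plan is to mimic the proof of \Cref{lem:bias-ind}, using the same alternative sampling $\cD_\theta'$ and the same decorrelation observation (\Cref{obs:ind-cond-M-T}), but replacing the one step that used subadditivity of entropy with an application of the chain rule. Concretely, I would again condition on the auxiliary set $\rT$ (conditioning can only reduce the entropy on the left, so this is safe), and then use that $\rrho$ is uniform over $\rT$ conditioned on $\rT = T$:
\begin{align*}
\en{\rW \mid \rMindex,\rrho,\rY({<}\rrho)}
&\geq \en{\rW \mid \rMindex,\rrho,\rY({<}\rrho),\rT} \\
&= \Exp_{\rT = T}\Bigl[\frac1b\sum_{i\in T}\en{\rY(i)\mid \rMindex,\rY({<}i),\rrho=i,\rT=T}\Bigr].
\end{align*}
Using \Cref{obs:ind-cond-M-T} (which makes $(\rY,\rMindex)$ independent of the event $\rrho = i$ given $\rT = T$) I would drop the conditioning on $\rrho = i$ in each term.

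The key step, where the proof diverges from that of \Cref{lem:bias-ind}, is to observe that the terms indexed by the elements $t_1 < t_2 < \cdots < t_b$ of $T$ telescope under the chain rule: once we condition on $\rT = T$, the coordinates of $\rY$ outside $T$ are identically $0$ (for $\theta\ge 0$), so $\rY({<}t_j)$ is equivalent to $(\rY(t_1),\ldots,\rY(t_{j-1}))$ as a random variable. Hence
\[
\sum_{j=1}^{b}\en{\rY(t_j)\mid \rMindex,\rY({<}t_j),\rT=T}
= \en{\rY(T)\mid \rMindex,\rT=T}
= \en{\rY\mid \rMindex,\rT=T},
\]
by the chain rule (\Cref{fact:en-mi-facts}-(\ref{part:chain-rule-entropy})). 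This is the main place where having the prefix $\rY({<}\rrho)$ in Bob's hand actually helps: it is precisely the ``history'' needed to convert the summed per-coordinate entropies into one joint entropy via the chain rule, replacing the sub-additivity bound used for the non-augmented case. Averaging over $\rT$ yields
\[
\en{\rW\mid \rMindex,\rrho,\rY({<}\rrho)}\;\geq\;\frac1b\,\en{\rY\mid \rMindex,\rT}.
\]

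From this point the proof is identical to the end of the proof of \Cref{lem:bias-ind}: write $\en{\rY\mid \rT} = \en{\rMindex\mid \rT} + \en{\rY\mid \rMindex,\rT}$ by the chain rule (using that $\rMindex$ is a function of $\rY$), bound $\en{\rMindex\mid \rT}\leq \en{\rMindex}$ by ``conditioning reduces entropy,'' and invoke \Cref{clm:entropy-large} to lower-bound $\en{\rY\mid \rT}$ by $b\cdot H_2(1/2+\theta) - 2\log n$. Rearranging and using $b\ge n/2$ produces the stated bound. The case $\theta < 0$ is handled symmetrically using the second version of $\cD_\theta'$, exactly as in \Cref{subsec:bias-ind}. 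The anticipated obstacle is really only the telescoping step above; once that is in place, every other manipulation carries over verbatim, and no new information-theoretic machinery is required.
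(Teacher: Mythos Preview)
Your proposal is correct and follows essentially the same route as the paper's proof: both condition on $\rT$, use \Cref{obs:ind-cond-M-T} to drop the $\rrho=i$ conditioning, observe that the bits outside $T$ are fixed so that $\rY({<}t_j)$ reduces to $(\rY(t_1),\ldots,\rY(t_{j-1}))$, apply the chain rule to collapse the sum to $\en{\rY\mid \rMindex,\rT=T}$, and then finish exactly as in \Cref{lem:bias-ind}. There is no substantive difference in strategy or detail.
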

	
	\begin{proof}
		It is sufficient to prove that 
		\[
		\en{\rW \mid \rMindex, \rrho, \rY(< \rrho)}  \geq \frac1b \cdot \en{\rY \mid \rMindex, \rT},
		\]
		as the rest of the proof follows similar to that of \Cref{lem:bias-ind}. 
		\begin{align*}
			\en{\rW \mid \rMindex, \rrho, \rY(< \rrho)} &\geq \en{\rW \mid \rMindex, \rrho, \rY(< \rrho). \rT}  \tag{by \Cref{fact:en-mi-facts}-(\ref{part:cond-reduce-entropy})} \\
			&= \Exp_{\rT = T} \big[\frac1{b} \cdot \sum_{\rho \in T} \en{\rY(\rho) \mid \rMindex, \rrho = \rho, \rY(<\rho), \rT = T}\big] \tag{as $\rrho$ is uniform over $T$, and $\rW = \rY(\rrho)$} \\
			&=  \Exp_{\rT = T}  \big[\frac1{b} \cdot \sum_{\rho \in T} \en{\rY(\rho) \mid \rMindex, \rY(<\rho), \rT = T}\big] \tag{as $\rY, \rMindex \perp (\rrho = \rho) \mid \rT= T$, by \Cref{obs:ind-cond-M-T}}.
		\end{align*}
		Let us bound the term for each set $T$ separately. We let $T = \set{\rho_1, \rho_2, \ldots, \rho_b} $ where $\rho_i < \rho_j$ if $i < j$ for all $i, j \in [b]$. We know that $Y([n] \setminus T)$ is fixed to be 0. Then,
		\begin{align*}
			\sum_{\rho \in T} \en{\rY(\rho) \mid \rMindex, \rY(<\rho), \rT = T}
			&= \sum_{i \in [b]} \en{\rY(\rho_i) \mid \rMindex, \rY(< \rho_i), \rT = T} \\
			&=  \sum_{i \in [b]} \en{\rY(\rho_i) \mid \rMindex, \rY(\set{\rho_1, \rho_2 ,\ldots, \rho_{i-1}}), \rT = T} \tag{as $Y([n] \setminus T)$ is fixed to be 0, we remove these bits from conditioning} \\
			&=  \en{\rY(T) \mid \rMindex, \rT = T} \tag{by chain rule of entropy \Cref{fact:en-mi-facts}-(\ref{part:chain-rule-entropy})} \\
			&=  \en{\rY \mid \rMindex, \rT = T}. \tag{again as $Y([n] \setminus T)$ is fixed to be 0, we can add these bits }
		\end{align*}
		Thus, we get the required lower bound on $\en{\rW \mid \rMindex, \rrho, \rY(< \rrho)}$ as follows:
		\begin{align*}
			\en{\rW \mid \rMindex, \rrho, \rY(< \rrho)} &\geq\Exp_{\rT = T}  \big[\frac1{b} \cdot \sum_{\rho \in T} \en{\rY(\rho) \mid \rMindex, \rY(<\rho), \rT = T}\big]  \\
			&= \Exp_{\rT = T}  \big[\frac1{b} \cdot \en{\rY \mid \rMindex, \rT = T} \big] \\
			&= \frac1b \cdot \en{\rY \mid \rMindex, \rT}. \qedhere
		\end{align*}
	\end{proof}
	
	The rest of the proof of \Cref{thm:aug-chain-lb} can be obtained by adding random variable $\rX_i(< \rvind_i)$ also to the tuple $\Prot_{<i}$ and following the steps in the proof of \Cref{thm:main}.
\end{proof}
\section{Applications to Streaming}\label{sec:applications}

In this section we give  applications of our main result to independent sets in vertex arrival streams and streaming submodular maximization. 

\subsection{Independent Sets}

In edge arrival streams, for any graph $G = (V, E)$, the vertex set $V$ with $n$ vertices is given, and the edges $E$ arrive in any arbitrary order. We are required to process the graph in limited space. 

In vertex arrival streams, for any graph $G = (V, E)$, the edges are grouped by their incident vertices. Vertices from $V$ arrive one by one (in any arbitrary order), and when a vertex arrives, all the edges connecting it to any previously arrived vertices are revealed. This makes the vertex arrival stream a strictly easier model than the edge arrival stream, as the order of edges is restricted. 

Indeed, for the maximal independent set problem, we know that finding algorithms in vertex arrival streams is easier; the greedy algorithm produces a maximal independent set in $\Ot(n)$ space, whereas, in edge arrival streams, any algorithm which finds a maximal independent set requires $\Omega(n^2)$ space \cite{AssadiCK19b,CormodeDK19}.

Maximum independent set (MIS), however, is provably hard in both vertex arrival streams and edge arrival streams. It is known that, any algorithm which performs an $\alpha$-approximation of MIS in edge arrival streams requires $\Omega(n^2/\alpha^2)$ space from \cite{HalldorssonSSW12}. In vertex arrival streams, \cite{CormodeDK19} proved a lower bound of $\Omega(n^2/\alpha^7)$. 
They also gave the following connection between $\chain$ problem and MIS in the proof of Theorem 9 of their paper.

\begin{proposition}[Rephrased from \cite{CormodeDK19}]\label{prop:chain-ind-sets}
	For any $\alpha \geq 1$, any algorithm that gives an $\alpha$-approximation of maximum independent sets in vertex arrival streams for $n$-vertex graphs  using space at most $s$ and probability of success at least 2/3 can be used to solve $\chain_{n^2/64\alpha^4, 2\alpha}$ with communication at most $2\alpha \cdot s$ bits and success probability at least $2/3$. 
\end{proposition}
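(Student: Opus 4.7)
The plan is to exhibit a reduction from $\chain_{m,k}$ with $m = n^2/(64\alpha^4)$ and $k = 2\alpha$ to the streaming MIS problem by building, from the chained instance, a vertex arrival stream on $n$ vertices whose maximum independent set has a multiplicative gap strictly larger than $\alpha$ between the two cases $z=0$ and $z=1$. I would follow the construction of Cormode, Dark, and Konrad: partition the $n$ vertices into $k$ layers of roughly $n/(2\alpha)$ vertices each, one layer per player, and design gadgets inside each layer so that those positions of $X_i$ matching the common answer $z$ are exactly the ones that contribute vertices to a large independent set. Using public randomness, the input of each player can be ``randomized'' (via the standard self-reducibility of $\indexprob$) so that the concentration arguments go through against any worst-case input.

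Given such a graph, I would simulate the MIS streaming algorithm $\ALG$ across the players as follows. At stage $i$, player $P_i$ holds $X_i$ and reads from the blackboard both the memory state of $\ALG$ (written by $P_{i-1}$) and the already-revealed indices $\sigma_1, \ldots, \sigma_{i-1}$; by the construction these are enough to determine the vertices of layer $i$ together with all their edges to previously arrived vertices. $P_i$ feeds these vertices into $\ALG$, then writes the updated $s$-bit memory state to the blackboard, after which $\sigma_i$ is released at no cost. After $P_k$ writes its state and $P_{k+1}$ uses $\sigma_k$ to finish the stream, $P_{k+1}$ queries $\ALG$ for an $\alpha$-approximate MIS size and compares it against a threshold chosen to separate the two regimes, outputting the corresponding bit. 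The total communication is $k \cdot s = 2\alpha \cdot s$ bits, matching the claim.

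The main obstacle is verifying the gap quantitatively: one must show that with probability at least $2/3$ over the input distribution to $\chain_{m,k}$, the size of the maximum independent set in the resulting graph differs by a factor strictly greater than $\alpha$ between the cases $z=0$ and $z=1$. The choice $m = n^2/(64\alpha^4)$ is dictated by demanding enough concentration within each of the $k = 2\alpha$ layers: the per-layer gadget has $\Theta(\sqrt{m}) = \Theta(n/\alpha^2)$ effective size, which combined with a union bound over the $k$ layers yields a net multiplicative gap exceeding $\alpha$. Combining this gap with the $1/3$ failure probability of $\ALG$ by a union bound delivers the claimed $2/3$ success probability of the protocol for $\chain_{m,k}$, completing the reduction.
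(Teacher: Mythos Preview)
The paper does not prove this proposition at all: it is stated as a black-box result ``Rephrased from \cite{CormodeDK19}'' (specifically, from the proof of their Theorem~9) and is immediately combined with \Cref{thm:main} to obtain \Cref{cor:ind-sets}. There is therefore no proof in the paper to compare your proposal against.

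That said, your sketch captures the correct high-level architecture of the Cormode--Dark--Konrad reduction: partition the vertices into $k=2\alpha$ layers, have each player feed its layer to the streaming algorithm and pass along the $s$-bit memory state, and threshold on the reported MIS size at the end. Two caveats. First, your appeal to ``public randomness'' and ``self-reducibility of $\indexprob$'' for concentration is extraneous here: in the actual construction the gap in MIS size between $z=0$ and $z=1$ is deterministic given the inputs, not a probabilistic concentration phenomenon over randomized inputs. Second, the substantive content of the reduction is the combinatorial gadget that turns each $(X_i,\sigma_i)$ pair into a bipartite piece whose contribution to the maximum independent set depends on $X_i(\sigma_i)$, and your proposal leaves this entirely unspecified; without it one cannot verify the claimed parameters $m=n^2/(64\alpha^4)$ or the factor-$\alpha$ gap. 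If you intend to write out the proof rather than cite it, you would need to supply that construction explicitly.
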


Our lower bound \Cref{thm:main}, along with \Cref{prop:chain-ind-sets} directly gives the following corollary. 

\begin{corollary}\label{cor:ind-sets}
	For $\alpha\geq1$, any $\alpha$-approximation of maximum independent sets in $n$-vertex graphs in vertex arrival streams uses $\Omega(n^2/\alpha^5 - \log n)$ space.  
\end{corollary}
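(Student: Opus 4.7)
The approach is a direct combination of \Cref{thm:main} with the reduction in \Cref{prop:chain-ind-sets} provided by Cormode et al. Suppose $\mathcal{B}$ is any streaming algorithm that $\alpha$-approximates the size of a maximum independent set in $n$-vertex vertex arrival streams, using space at most $s$ and succeeding with probability at least $2/3$. First I would invoke \Cref{prop:chain-ind-sets} to convert $\mathcal{B}$ into a protocol $\pi$ for $\chain_{N,k}$ with $N = n^2/(64\alpha^4)$ and $k = 2\alpha$, using at most $2\alpha \cdot s$ bits of communication in total and succeeding with probability at least $2/3$.

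Next, I would apply \Cref{thm:main} to $\pi$, which asserts that any protocol solving $\chain_{N,k}$ with success probability at least $2/3$ must use $\Omega(N - k \log N)$ bits. Substituting the values of $N$ and $k$ and simplifying the logarithm via $\log(n^2/(64\alpha^4)) = O(\log n)$ (which holds in the only interesting regime $\alpha \leq n$),
\[
N - k\log N \;=\; \frac{n^2}{64\alpha^4} - 2\alpha \cdot \log\!\Paren{\frac{n^2}{64\alpha^4}} \;=\; \Omega\!\Paren{\frac{n^2}{\alpha^4} - \alpha\log n}.
\]
Combining this with the communication upper bound $2\alpha \cdot s$, we obtain $2\alpha \cdot s = \Omega(n^2/\alpha^4 - \alpha\log n)$, and dividing both sides by $2\alpha$ yields the desired bound $s = \Omega(n^2/\alpha^5 - \log n)$. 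Finally, by Yao's minimax principle this carries over to randomized streaming algorithms on worst-case inputs.

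There is no real obstacle here: once \Cref{thm:main} and \Cref{prop:chain-ind-sets} are in hand, the corollary is purely an arithmetic substitution. The quantitative improvement from the prior $\Omega(n^2/\alpha^7)$ bound of Cormode et al. to $\Omega(n^2/\alpha^5)$ comes entirely from replacing their $\Omega(N/k^2)$ lower bound on $\chain_{N,k}$ with our optimal $\Omega(N)$ bound, which precisely saves the factor $k^2 = \Theta(\alpha^2)$; everything else in the reduction is inherited verbatim from \cite{CormodeDK19}.
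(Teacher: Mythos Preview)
Your proposal is correct and matches the paper's approach exactly: the paper does not give a separate proof of the corollary but simply states that it follows directly from combining \Cref{thm:main} with \Cref{prop:chain-ind-sets}, which is precisely the substitution you carry out. The only superfluous step is the final appeal to Yao's minimax principle, since \Cref{thm:main} already covers randomized protocols.
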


This further reduces the gap between the lower bounds for $\alpha$-approximation of MIS in vertex arrival streams and edge arrival streams by an $\alpha^2$ factor. 

\subsection{Submodular Maximization}

In this subsection, we will summarize our slight improvements to lower bounds for streaming submodular maximization. 

A function $f$ over ground set $V$  from $f: 2^{V} \rightarrow \IR$ is submodular if and only if, for any two sets $A \subset B \subset V$ and for any element $x \in V \setminus B$,
\[
	f(B \cup \{x\}) - f(B) \leq f(A \cup \{x\})-f(A).
\]
This captures the diminishing returns property of any submodular function. 

We are interested in maximization of a monotone submodular function subject to a cardinality constraint. That is, for a given $\ell$, we want to find a subset $S \subset V$ with $\card{S} \leq \ell$ such that for any other set $T \subset V$ with $\card{T} \leq \ell$, $f(S) \geq f(T)$. 

We are given oracle access to function $f$, however, we do not have access to the entirety of the ground set. The elements of the ground set $V$ arrive one by one, and the algorithm has space $s$ to either store the incoming element or to discard it. The algorithm can query the oracle to $f$ with any subset of the elements currently in storage. We want the storage to be roughly the same as the output size, which is $\ell$.

In this model, \cite{KazemiMZLK19} gave an algorithm which finds a $(1/2-\epsilon)$-approximation in $O(\ell/\epsilon)$ space. \cite{FeldmanNSZ20} showed that a better approximation was not possible. They proved that any algorithm which gets a $(1/2+\epsilon)$-approximation uses $\Omega(\epsilon\card{V}/\ell^3)$. They give the following connection to the $\chain$ problem in Theorem 1.3 and Theorem 1.4 of their paper.

\begin{proposition}[Rephrased from \cite{FeldmanNSZ20}]\label{prop:chain-submod-maxim}
	For any $\epsilon > 0$, there exists a constant $\ell_0$ such that for any $\ell \geq \ell_0$, any randomized streaming algorithm which maximizes a monotone submodular function $f:2^{V} \rightarrow \IR$ subject to cardinality constraint of at most $\ell$, using space at most $s$ and with approximation factor at least $(1/2 + \epsilon)$
in expectation	can be used to solve $\chain_{\card{V}/\ell,\ell}$ with probability of success at least 2/3 and communication at most $s \cdot O(\ell/\epsilon)$.
\end{proposition}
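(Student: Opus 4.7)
My plan is to exhibit a black-box reduction from $\chain_{\card{V}/\ell,\ell}$ to streaming monotone submodular maximization, following the template of Feldman et al. The ground set $V$ is partitioned into $\ell$ blocks $V_1, \ldots, V_\ell$, each of size $n = \card{V}/\ell$, and the $n$ elements of block $V_i$ are placed in bijection with the $n$ positions of the string $X_i$ held by player $P_i$. The heart of the construction is a family of monotone submodular functions $f$, parametrized by the chain input and built from a carefully tuned partition-matroid coverage gadget, such that the optimum size-$\ell$ set has value attainable only by sets that correctly encode the common bit $z$, while every feasible size-$\ell$ set that is uncorrelated with $z$ has value at most $\OPT/(2 - O(\epsilon))$.

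Given a streaming algorithm $\ALG$ with space $s$ and expected $(1/2+\epsilon)$-approximation, the reduction streams the elements of $V$ block-by-block in the order $V_1, V_2, \ldots, V_\ell$. Player $P_i$, upon reading the current state of $\ALG$ from the blackboard (written by $P_{i-1}$, of size at most $s$), uses $X_i$ (and $\sigma_{i-1}$ where needed) to construct the elements of $V_i$, feeds them to $\ALG$ one at a time, and posts the updated state back to the board. After $P_\ell$ has finished, player $P_{\ell+1}$ reads the final state, queries $\ALG$ for its output set $S$ of size $\ell$, and decodes $z$ by inspecting which positions of each block have been selected in $S$; the gap property of $f$ guarantees that this decoding is correct whenever $\ALG$ meets its approximation guarantee.

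The remaining step is converting the \emph{in-expectation} approximation into a \emph{constant-probability} correct protocol for $\chain$. I would run $O(1/\epsilon)$ independent copies of $\ALG$ in parallel and return the output with the largest evaluated value, which by a Markov's-inequality argument applied to the approximation ratio boosts the probability that at least one copy succeeds to a constant. This parallelization multiplies the state size by $O(1/\epsilon)$, and since each of the $\ell$ players must transmit the full state, the total communication becomes $s \cdot O(\ell/\epsilon)$. The main technical obstacle I foresee is calibrating the submodular gadget so that the gap between the optimum and any $z$-oblivious solution is pinned exactly at the $(1/2+\epsilon)$ threshold; this requires the explicit matroid-coverage construction of Feldman et al.\ and is the place where any wholesale rewrite would need the most care.
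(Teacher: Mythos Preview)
The paper does not prove this proposition; it is quoted from Feldman, Norouzi-Fard, Svensson, and Zenklusen \cite{FeldmanNSZ20} (their Theorems~1.3 and~1.4) and used as a black box, so there is no in-paper argument to compare your sketch against. Your outline---partitioning $V$ into $\ell$ blocks, a coverage-style monotone submodular gadget with a $(1/2+\epsilon)$ gap, block-by-block simulation with each player forwarding the $s$-bit algorithm state on the blackboard, and $O(1/\epsilon)$ parallel copies to convert an in-expectation guarantee into constant success probability---matches the high-level structure of that reduction and accounts for the stated $s\cdot O(\ell/\epsilon)$ communication.

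One point deserves more care than your sketch gives it. Your boosting step has $\player{\ell+1}$ ``return the output with the largest evaluated value,'' which would require $\player{\ell+1}$ to compute $f$ on each candidate set. But $\player{\ell+1}$ holds only $\sigma_\ell$ (together with the indices $\sigma_1,\ldots,\sigma_{\ell-1}$ revealed on the board), not the strings $X_1,\ldots,X_\ell$, and $f$ necessarily depends on those strings for the gap to exist. In the actual construction of \cite{FeldmanNSZ20}, the gadget and the amplification are arranged so that oracle queries during the stream can be answered locally by the current player and so that the final decoding does not require a global evaluation of $f$ by $\player{\ell+1}$. This is precisely the ``explicit matroid-coverage construction'' you already flag as the delicate step; a full proof would need to spell it out rather than rely on picking the copy of maximum value.
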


We get an improvement  of $\ell$ factor over the current state-of-art lower bound in \cite{FeldmanNSZ20} as a corollary of \Cref{thm:main}.

\begin{corollary}\label{cor:submod-max-chain}
	For any $ \epsilon > 0$, there exists a constant $\ell_0$ such that for any  $\ell \geq \ell_0$, any streaming algorithm that maximizes a monotone submodular function $f: 2^V \rightarrow \IR$ subject to a cardinality constraint of at most $\ell$, with an approximation factor at least $(1/2 + \epsilon)$,  requires $\Omega(\card{V}\epsilon/\ell^2 - \eps \log (\card{V}))$ space.
\end{corollary}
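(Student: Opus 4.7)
The plan is to obtain \Cref{cor:submod-max-chain} as an immediate consequence of our main lower bound \Cref{thm:main} combined with the reduction of \cite{FeldmanNSZ20} stated as \Cref{prop:chain-submod-maxim}. Since the heavy lifting has already been done, the task reduces to carefully tracking parameters through the reduction and performing the resulting algebra.

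First I would take the contrapositive: suppose there exists a streaming algorithm $\mathcal{A}$ that $(1/2+\epsilon)$-approximates the monotone submodular maximization problem subject to a cardinality constraint $\ell$, using space at most $s$. By \Cref{prop:chain-submod-maxim}, feeding $\mathcal{A}$ into the black-box reduction of \cite{FeldmanNSZ20} yields a protocol $\pi$ solving $\chain_{\card{V}/\ell,\,\ell}$ with success probability at least $2/3$ and total communication at most $O(s\cdot \ell/\epsilon)$.

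Next I would apply \Cref{thm:main} to this protocol with parameters $n := \card{V}/\ell$ and $k := \ell$. This gives the lower bound
\[
O\!\paren{s \cdot \tfrac{\ell}{\epsilon}} \;\geq\; \Omega\!\paren{\tfrac{\card{V}}{\ell} - \ell\log\!\tfrac{\card{V}}{\ell}}.
\]
Rearranging (multiplying both sides by $\epsilon/\ell$) yields $s = \Omega\!\paren{\tfrac{\epsilon \card{V}}{\ell^2} - \epsilon\log\!\tfrac{\card{V}}{\ell}}$, and since $\log(\card{V}/\ell) \leq \log\card{V}$ we obtain the stated bound $\Omega\!\paren{\card{V}\epsilon/\ell^2 - \epsilon\log\card{V}}$. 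The constant $\ell_0$ is inherited directly from \Cref{prop:chain-submod-maxim}.

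There is no genuine obstacle to overcome here: the only subtle points are (i) verifying that the success probability $2/3$ is preserved through the reduction (which it is, by the statement of \Cref{prop:chain-submod-maxim}), and (ii) making sure the additive $-k\log n$ slack in \Cref{thm:main} translates into the $-\epsilon\log\card{V}$ term of the corollary under the substitution $n=\card{V}/\ell,\ k=\ell$. The quantitative improvement of a factor of $\ell$ over \cite{FeldmanNSZ20} comes entirely from replacing their $\Omega(n/k^2)$ lower bound on $\chain_{n,k}$ with our $\Omega(n - k\log n)$ bound, so no further new ingredients are required.
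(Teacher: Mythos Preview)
Your proposal is correct and is precisely the argument the paper has in mind: the corollary is stated immediately after \Cref{prop:chain-submod-maxim} with no separate proof, so the intended derivation is exactly the parameter substitution $n=\card{V}/\ell$, $k=\ell$ into \Cref{thm:main} combined with the reduction, followed by dividing through by $O(\ell/\epsilon)$. Your bookkeeping, including the weakening $\log(\card{V}/\ell)\leq\log\card{V}$, is fine.
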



\subsection*{Acknowledgements}

The author is thankful to Sepehr Assadi for introducing them to the problem and for insightful discussions on the proof. The author would also like to thank Parth Mittal for useful comments, and Christian Konrad for introducing them to the Augmented Chain problem.

The author is very grateful to Mi-Ying Huang, Xinyu Mao, Guangxu Yang and Jiapeng Zhang for an illuminating discussion about the problem. They pointed out an important flaw in an earlier version of this work, and the discussion was instrumental for the new proofs in the current version. 

\bibliographystyle{alpha}
\bibliography{new}
\appendix

\part*{Appendix}

\section{Protocol for $\indexprob_n$} \label{app:index-ub}

This section gives a protocol for $\indexprob_{n}$ that matches the lower bound of $\Omega(n\delta^2)$ for advantage $\delta \in (0,1/2)$ from \cite{ChakrabartiCKM13}. This protocol is folklore, and we present it here only for completeness.  

\begin{tbox}
	\textbf{Protocol $\pi_{\delta}$ for $\indexprob_n$ with advantage $\Omega(\delta)$ for $\delta \in (0,1/2)$:}
	\begin{enumerate}[label = $(\roman*)$]
		\item Alice and Bob sample a string $ A \in \{0,1\}^n$ and a permutation $P$ of the elements of $[n]$. 
		\item Alice computes string $Y \in \{0,1\}^n$ as follows. For $i \in [n]$,
		\begin{equation}\label{eq:change-index}
		Y(P(i)) = X(i) \oplus A(i).
		\end{equation}
		\item Alice divides  string $Y$ into $n\delta^2$ blocks of size $1/\delta^2$ each. That is,  block $j$ for $j \in [n \delta^2]$ contains bits $Y((j-1)/\delta^2 + 1), Y((j-1)/\delta^2 + 2), \ldots, Y(j/\delta^2)$.
		\item Alice sends the string $Z \in \{0,1\}^{n\delta^2}$ where $Z(j)$ is the majority bit of block $j$ for $j \in [n\delta^2]$. 
		\item Bob finds $j \in [n\delta^2]$ such that for the input index $\sigma$, $\frac{j-1}{\delta^2}<P(\sigma) \leq \frac{j}{\delta^2}$, and outputs $Z(j) \oplus A(\ind)$. 
	\end{enumerate}
\end{tbox}

Let $\rY, \rZ, \rv{P}$ and $\rv{P}(\rvind)$ denote the random variables corresponding to strings $Y, Z$, permutation $P$ and index $P(\ind)$ respectively. 	For any string $A \in \{0,1\}^n$, let $\card{A}_1$ denote the number of 1s in $A$. 

Let us prove that protocol $\pi_{\delta}$ succeeds with the required advantage. We start with a simple observation about the distribution of the input after the transformation in step $(ii)$.

\begin{observation}\label{obs:dist-ind-uniform}
	Random variable $\rY$ is uniform over $\{0,1\}^n$ and $\rv{P}(\rvind)$ is uniform over $[n]$.
\end{observation}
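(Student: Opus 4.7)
The plan is a short direct calculation using the two independent sources of public randomness: the random string $A$ and the random permutation $P$.

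First I would handle $\rv{P}(\rvind)$. Since the input index $\sigma$ is arbitrary but fixed (or at worst depends on Bob's input only), and $\rv{P}$ is a uniformly random permutation of $[n]$ chosen independently of $\rvind$, for each fixed $\sigma \in [n]$ the random variable $\rv{P}(\sigma)$ is uniform on $[n]$ (a uniform permutation maps any fixed element to a uniform image). Marginalizing over $\rvind$ preserves this uniformity.

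Next I would handle $\rY$. I would condition on an arbitrary permutation $P$ and show uniformity of $\rY$ given $\rv{P}=P$; since this holds for every $P$, the marginal is uniform. Given $\rv{P}=P$, the relation \eqref{eq:change-index} determines $Y$ by $Y(P(i)) = X(i) \oplus A(i)$. Define the auxiliary string $Y' \in \{0,1\}^n$ by $Y'(i) = X(i) \oplus A(i)$; then $Y$ is simply $Y'$ with its coordinates reindexed by $P$. Because $A$ is uniform on $\{0,1\}^n$ and independent of $X$, the string $Y' = X \oplus A$ is uniform on $\{0,1\}^n$ for every fixed $X$; permuting coordinates by a fixed $P$ preserves the uniform product measure, so $\rY \mid \rv{P}=P$ is uniform on $\{0,1\}^n$. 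Averaging over $P$ gives the claim.

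There is no real obstacle; the only small thing to note is the order of quantifiers (we fix $X$ and $\sigma$ and take probability over the public randomness $A, P$), which is exactly the self-reducibility argument recalled in the introduction. I would write the two items as a short two-sentence proof, one per assertion.
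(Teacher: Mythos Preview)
Your proposal is correct and follows essentially the same approach as the paper: the paper's proof is just the two-sentence version you anticipate, noting that $X \oplus A$ is uniform because $A$ is uniform public randomness, and that a uniform permutation maps any fixed $\sigma$ to a uniform image in $[n]$. Your additional care in conditioning on $P$ and observing that coordinate permutation preserves the uniform product measure is a harmless elaboration of the same argument.
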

\begin{proof}
	The distribution of string $X \oplus A$ is uniform over $\{0,1\}^n$ as the string $A$ is chosen uniformly at random from $\{0,1\}^n$ using public randomness. The permutation $P$ is chosen uniformly at random, and thus, it can map $\ind$ to any value in $[n]$ with equal probability. 
\end{proof}

We also need anti-concentration bounds for the binomial distribution. 

\begin{claim}\label{clm:anti-conc-bin}
For any large $t \in \IN$, and string $B$ sampled uniformly at random from $\{0,1\}^t$, and constant $c \geq 0$, \[
	\Pr\Bracket{\card{\card{B}_1 - \frac{t}2}} < c \sqrt{t}] \leq 2c.
\]
\end{claim}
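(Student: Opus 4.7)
The plan is to observe that $|B|_1$ follows a $\textnormal{Binomial}(t, 1/2)$ distribution and then to directly bound the probability by summing the point masses at integers in the relevant interval, using the upper bound on binomial coefficients provided in \Cref{fact:binom-lower-bound}. Concretely, I would write
\[
\Pr\bracket{\card{\card{B}_1 - t/2} < c\sqrt{t}} = \sum_{k \,:\, \card{k - t/2} < c\sqrt{t}} \binom{t}{k} \cdot 2^{-t}.
\]

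The first step is to show that every term in the sum is at most $\sqrt{2/(\pi t)}$ (up to lower order corrections). By \Cref{fact:binom-lower-bound}, $\binom{t}{k} \leq 2^{t \cdot H_2(k/t)} \sqrt{t/(2\pi k(t-k))}$. Since $H_2$ is maximized at $1/2$ with $H_2(1/2) = 1$, the exponential factor is at most $2^t$. For the polynomial factor, for $k$ within $c\sqrt{t}$ of $t/2$ I would use $k(t-k) \geq (t/2 - c\sqrt{t})(t/2 + c\sqrt{t}) = t^2/4 - c^2 t$, so that for $t$ large enough (depending on $c$) we get $k(t-k) \geq t^2/4 \cdot (1-o(1))$. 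Putting these together yields $\binom{t}{k} \cdot 2^{-t} \leq \sqrt{2/(\pi t)} \cdot (1 + o(1))$.

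The second step is to count the number of integer terms in the sum, which is at most $2c\sqrt{t}$ (strictly, at most $\lfloor 2c\sqrt{t}\rfloor + 1$, but the extra $+1$ is absorbed into the $(1+o(1))$ for large $t$). Multiplying the number of terms by the bound on each term gives
\[
\Pr\bracket{\card{\card{B}_1 - t/2} < c\sqrt{t}} \leq 2c\sqrt{t} \cdot \sqrt{\frac{2}{\pi t}} \cdot (1+o(1)) = 2c \cdot \sqrt{\frac{2}{\pi}} \cdot (1+o(1)).
\]

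Since $\sqrt{2/\pi} < 1$, the right-hand side is at most $2c$ for $t$ sufficiently large, which is precisely the sense of ``large $t$'' in the claim. The main (mild) obstacle is handling the corner case where $c$ may be comparable to $\sqrt{t}$; this is ruled out both by the hypothesis that $c$ is a constant and $t$ is large, and by the fact that when $c\sqrt{t} \geq t/2$ the claim is vacuous (since the probability is trivially at most $1 \leq 2c$ for $c \geq 1/2$). Everything else in the proof is a routine application of \Cref{fact:binom-lower-bound} and elementary calculus.
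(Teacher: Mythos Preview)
Your proposal is correct and follows essentially the same approach as the paper: write the probability as a sum of at most $2c\sqrt{t}$ point masses, bound each by the central term using \Cref{fact:binom-lower-bound}, and multiply. The paper streamlines one step by first using the elementary inequality $\binom{t}{i} \leq \binom{t}{t/2}$ and then applying \Cref{fact:binom-lower-bound} only at $i = t/2$ (yielding $\binom{t}{t/2} \leq 2^t/\sqrt{t}$), which avoids your handling of $k(t-k)$ and the $(1+o(1))$ factors, but the argument is otherwise identical.
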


\begin{proof}
	
	We use \Cref{fact:binom-lower-bound} to get an upper bound on $\binom{t}{t/2}$:
	\begin{align}\label{eq:binom-ub}
		\binom{t}{t/2} \leq 2^{t \cdot H_2(1/2)} \cdot \sqrt{\frac{4t}{2 \pi t^2}} \leq \frac{2^t}{\sqrt{t}}.  
	\end{align}
	
	We know that,
	\begin{align*}
		\Pr\Bracket{\card{\card{B}_1 - \frac{t}2}} < c \sqrt{t}] &\leq \sum_{i= t/2 - c\sqrt{t}}^{t/2 + c\sqrt{t}} \Pr[\card{B}_1 = i] \\
		&= \sum_{i= t/2 - c\sqrt{t}}^{t/2 + c\sqrt{t}} \binom{t}{i} \cdot \frac1{2^t} \\
		&\leq \sum_{i= t/2 - c\sqrt{t}}^{t/2 + c\sqrt{t}} \binom{t}{t/2} \cdot \frac1{2^t} \tag{as $\binom{t}{i} \leq \binom{t}{t/2}$ for all $i \in [t]$} \\
		&\leq \frac1{2^t} \cdot \frac{2^t}{\sqrt{t}} \cdot 2c\sqrt{t} \tag{from \Cref{eq:binom-ub}} \\
		&= 2c. \qedhere
	\end{align*}
\end{proof}

\begin{lemma}\label{lem:proof-index-ub}
	For any $\delta \in [0,1/2]$,  $\pi_{\delta}$ is a protocol for $\indexprob_n$ that communicates $n\delta^2$ bits and succeeds with probability at least $\frac12 + \Omega(\delta)$. 
\end{lemma}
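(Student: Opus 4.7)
The plan is to handle the communication bound by inspection and to reduce the correctness analysis to a single-block majority argument followed by anti-concentration of the binomial. The communication is immediate: Alice's only message is the string $Z \in \{0,1\}^{n\delta^2}$, of length $n\delta^2$. For correctness, I would first characterize when Bob succeeds: he outputs $Z(j) \oplus A(\sigma)$ where $j$ is the block containing $P(\sigma)$, and \eqref{eq:change-index} gives $X(\sigma) = Y(P(\sigma)) \oplus A(\sigma)$, so Bob is correct if and only if $Z(j) = Y(P(\sigma))$.

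Next, I would invoke \Cref{obs:dist-ind-uniform} and strengthen it to full independence: conditioned on any realization of $(\rv{P}, \rvind)$, the string $\rY$ is a deterministic permutation of $X \oplus \rA$, which is uniform in the public randomness $\rA$. Hence $\rY$ is uniform on $\{0,1\}^n$, $\rv{P}(\rvind)$ is uniform on $[n]$, and the two are independent. Given this, the analysis reduces to a single block. Condition on $\rv{P}(\rvind)$ landing in a fixed block $j$: the position is then uniform over the $t = 1/\delta^2$ coordinates of the block, independent of the block bits $\rY(\text{block } j)$, which are i.i.d.\ uniform in $\{0,1\}$. If the block contains $k$ ones, the majority bit $Z(j)$ matches $\rY$ in $\max(k, t-k)$ of the $t$ positions, so the conditional success probability is $\max(k, t-k)/t = 1/2 + |k - t/2|/t$. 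Averaging over the block bits, and then over the (symmetric) choice of block, yields overall success probability $1/2 + \Exp[|k - t/2|]/t$ for $k \sim \mathrm{Bin}(t, 1/2)$.

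Finally, I would apply \Cref{clm:anti-conc-bin} with the constant $c = 1/4$ to conclude $\Pr[|k - t/2| < \sqrt{t}/4] \leq 1/2$, and therefore $\Exp[|k - t/2|] \geq \sqrt{t}/8$. Substituting $t = 1/\delta^2$ gives success probability at least $1/2 + 1/(8\sqrt{t}) = 1/2 + \delta/8 = 1/2 + \Omega(\delta)$, as required. The only subtlety I anticipate is the independence argument between $\rY$ and $\rv{P}(\rvind)$ in the second paragraph, since both depend on the shared permutation $\rv{P}$; the public random string $\rA$ resolves this via the conditioning argument above. All remaining steps are routine manipulations of the binomial distribution, modulo a padding step to ensure $1/\delta^2$ divides $n$.
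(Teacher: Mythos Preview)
Your proposal is correct and follows essentially the same approach as the paper: both reduce to a single block, observe that Bob succeeds exactly when $P(\sigma)$ lands on a majority-bit position, and apply \Cref{clm:anti-conc-bin} to lower bound the advantage by $\Omega(\delta)$. The only cosmetic differences are that you compute the expectation $\Exp[|k - t/2|]/t$ directly whereas the paper conditions on the event $\cEstar$ that $|k - t/2| \geq c\sqrt{t}$, and you are more explicit about the independence of $\rY$ and $\rv{P}(\rvind)$ (which the paper's proof implicitly uses but justifies only via marginal uniformity in \Cref{obs:dist-ind-uniform}).
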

\newcommand{\jstar}{\ensuremath{j^{\star}}}
\newcommand{\cEstar}{\ensuremath{\cE^*}}
\begin{proof}
	The total number of bits sent by Alice is the length of string $Z$ which is $n\delta^2$. 
	Let block $\jstar$ be the block that Bob finds such that the index $P(\ind)$ lies in the block.
	We know that Bob outputs $Z(\jstar) \oplus A(\ind)$, whereas, $X(\ind) = Y(P(\ind)) \oplus A(\ind)$ from \Cref{eq:change-index}. Therefore, Bob outputs the correct answer whenever $Z(\jstar)$ equals $Y(P(\ind))$. 
	We will lower bound this probability. 
	
	In block $\jstar$, we call an index $i$ with $(\jstar-1)/\delta^2 + 1 \leq i \leq \jstar/\delta^2$ a \textbf{good index} if index $i$ contains the majority bit $Z(\jstar)$. We want to lower bound the number of good indices. 
	
	We know the distribution of block $\jstar$ is uniform over $\{0,1\}^{1/\delta^2}$ by \Cref{obs:dist-ind-uniform}.  Thus, we can use \Cref{clm:anti-conc-bin} to infer that the number of 1s in block $\jstar$ lies in the range $\Bracket{\frac1{2\delta^2} - \frac{c}{\delta}, \frac1{2\delta^2} + \frac{c}{\delta}}$ with probability at most $2c$ for constant $c \geq 0$. Let $\cEstar$ denote the event that the number of 1s is \emph{not} lying in this range. This happens with probability at least $1-2c$. 
	
	Conditioned on event $\cEstar$, the number of 1s in block $\jstar$ falls into one of two cases:
\begin{enumerate}[label = $(\roman*)$]
	\item 	At least $\frac1{2\delta^2} + \frac{c}{\delta}$, where the majority bit becomes 1, and the number of good indices is at least $\frac1{2\delta^2} + \frac{c}{\delta}$, or,
	\item At most $\frac1{2\delta^2}-  \frac{c}{\delta}$, and the majority bit becomes 0, and the number of good indices is again at least $\frac1{2\delta^2} + \frac{c}{\delta}$. 
\end{enumerate}	

Therefore, the number of good indices is large whenever $\cEstar$ happens.  Conditioned on $\cEstar$ , and that index $P(\ind)$ lies in block $\jstar$, we get,
	\begin{align*}
	\Pr\Bracket{\textnormal{Index $P(\ind)$ is a good index}} &= \frac{\textnormal{Number of good indices in block $\jstar$}}{\textnormal{Total number of indices in block $\jstar$}} \tag{by \Cref{obs:dist-ind-uniform}, distribution of $P(\ind)$ is uniform over $[n]$} \\
	&\geq \frac1{(1/\delta^2)} \cdot \paren{\frac1{2\delta^2} + \frac{c}{\delta}} \\
	&= \frac12 + c\delta. 
	\end{align*}

When $\cEstar$ does not happen, in the worst case, Bob gets no advantage, and succeeds with probability 1/2.
The probability of success is at least,
\begin{align*}
	\Pr[Z(\jstar) = Y(P(\ind))] &\geq \Pr[\cEstar] \cdot \paren{\frac12 + c\delta} + \Pr[\neg \cEstar] \cdot \frac12 \\
	&= \frac12 \cdot (\Pr[\cEstar] + \Pr[\neg \cEstar]) + c\delta \cdot \Pr[\cEstar] \\
	&= \frac12 + c\delta \cdot \Pr[\cEstar] \\
	&\geq \frac12 + \delta \cdot c(1-2c). 
\end{align*}
The advantage gained by Bob is lower bounded by $\delta \cdot c(1-2c) = \Omega(\delta)$, concluding our proof. 
\end{proof}

\end{document}